\documentclass[lettersize, journal]{IEEEtran}

% Recommended, but optional, packages for figures and better typesetting:
\usepackage[utf8]{inputenc} % allow utf-8 input
\usepackage[T1]{fontenc}    % use 8-bit T1 fonts
\usepackage{hyperref}       % hyperlinks
\usepackage{url}            % simple URL typesetting
\usepackage{booktabs}       % professional-quality tables
\usepackage{amsfonts}       % blackboard math symbols
\usepackage{nicefrac}       % compact symbols for 1/2, etc.
\usepackage{microtype}      % microtypography
\usepackage{xcolor}         % colors
\usepackage[nocompress]{cite}
\usepackage{amsmath}
\usepackage{algorithm}
\usepackage{array}
\usepackage{multirow}
\usepackage{enumitem}
\usepackage{makecell}
\usepackage{bm}
\usepackage[pdftex]{graphicx}
\usepackage{amsthm}
\usepackage{setspace}
\usepackage{subfig}

\setlist{leftmargin=*}

\usepackage{color}
\definecolor{softorange}{RGB}{255, 120, 0}
\def\red#1{\textcolor{red}{#1}}

\long\def\comment#1{}

\def\ie{$i.e.$}
\def\eg{$e.g.$}

\newcommand{\partitle}[1]{\vspace{0.3em} \noindent \textbf{#1.}}
\hypersetup{colorlinks = true,linkcolor = blue,anchorcolor =blue,citecolor = blue,filecolor = red,urlcolor = blue, pdfauthor=author}

\newtheorem{proposition}{Proposition}

\newtheorem{definition}{Definition}

% \usepackage{fancyhdr}
% \pagestyle{fancy}
% \fancyhf{}
% \renewcommand{\headrulewidth}{0pt}

% \fancyfoot[C]{\footnotesize
% © 2026 IEEE. Personal use of this material is permitted. Permission from IEEE must be obtained for all other uses, in any current or future media, including reprinting/republishing this material for advertising or promotional purposes, creating new collective works, for resale or redistribution to servers or lists, or reuse of any copyrighted component of this work in other works.
% }

\begin{document}

\title{FIT-Print: Towards False-claim-resistant Model Ownership Verification via Targeted Fingerprint}

\author{Shuo~Shao,
        Haozhe~Zhu,
        Yiming~Li,
        Hongwei~Yao,
        Tianwei~Zhang,
        and Zhan~Qin
\thanks{Shuo Shao, Haozhe Zhu, and Zhan Qin are with the State Key Laboratory of Blockchain and Data Security, Zhejiang University, Hangzhou, 310027, China, and also with the Hangzhou High-Tech Zone (Binjiang) Institute of Blockchain and Data Security, Hangzhou, 310051, China (e-mail: \{shaoshuo\_ss, howjul, qinzhan\}@zju.edu.cn).}
\thanks{Yiming Li and Tianwei Zhang are with the College of Computing and Data Science, Nanyang Technological University, Singapore, 639798, Singapore (e-mail: liyiming.tech@gmail.com, tianwei.zhang@ntu.edu.sg).}
\thanks{Hongwei Yao is with the Department of Computer Science, City University of Hong Kong, Hong Kong, 999077, China (e-mail: yao.hongwei@cityu.edu.hk).}
\thanks{Corresponding Author: Yiming Li (liyiming.tech@gmail.com).}
\thanks{© 2026 IEEE. Personal use of this material is permitted. Permission from IEEE must be obtained for all other uses, in any current or future media, including reprinting/republishing this material for advertising or promotional purposes, creating new collective works, for resale or redistribution to servers or lists, or reuse of any copyrighted component of this work in other works.}
}

\markboth{IEEE Transactions on Information Forensics and Security}%
{Shao \MakeLowercase{\textit{et al.}}: FIT-Print}

\maketitle

\begin{abstract}
% Model fingerprinting is a widely adopted approach to safeguard the intellectual property rights of open-source models by preventing their unauthorized reuse. It is promising and convenient since it does not necessitate modifying the protected model. In this paper, we revisit existing fingerprinting methods and reveal that they are vulnerable to false claim attacks where adversaries falsely assert ownership of any third-party model. We demonstrate that this vulnerability mostly stems from their \emph{untargeted} nature, where they generally compare the outputs of given samples on different models instead of the similarities to specific references. Motivated by these findings, we propose a targeted fingerprinting paradigm (\ie, FIT-Print) to counteract false claim attacks. Specifically, FIT-Print transforms the fingerprint into a targeted signature via optimization. Building on the principles of FIT-Print, we develop bit-wise and list-wise black-box model fingerprinting methods, \ie, FIT-ModelDiff and FIT-LIME, which exploit the distance between model outputs and the feature attribution of specific samples as the fingerprint, respectively. Extensive experiments on benchmark models and datasets verify the effectiveness, conferrability, and resistance to false claim attacks of our FIT-Print.
Model fingerprinting has emerged as a crucial mechanism for safeguarding the intellectual property of open-source models, offering a non-intrusive approach that requires no modifications to the protected model. However, our analysis reveals that existing fingerprinting techniques are fundamentally vulnerable to false claim attacks, wherein adversaries can fraudulently assert ownership over independent third-party models. We demonstrate that this vulnerability stems from the untargeted nature of current methods, which evaluate model similarity based on arbitrary sample outputs rather than alignment with a specific, predefined reference. To mitigate this vulnerability, we introduce FIT-Print, a targeted fingerprinting paradigm that actively counters false claim attacks. Specifically, FIT-Print leverages optimization to transform the fingerprint into a verifiable, targeted signature. Building upon this foundation, we propose two black-box fingerprinting methods, the bit-wise FIT-ModelDiff and the list-wise FIT-LIME, which utilize output distances and feature attributions as robust model signatures, respectively. Extensive evaluations across benchmark models and datasets show that our framework perfectly neutralizes false claim attacks (100\% defense success rate) and eliminates false alarms on independent models (0.0\%), all while maintaining a 100\% ownership verification rate against diverse model reuse techniques.

% validate the effectiveness and conferrability of FIT-Print. Notably, compared to existing baselines, 
\end{abstract}

\begin{IEEEkeywords}
Model Fingerprinting, Ownership Verification, AI Copyright Protection, Trustworthy ML
\end{IEEEkeywords}

\section{Introduction}
\label{sec:intro}

% para1: 深度学习非常重要，模型具有很大的价值，深度学习模型的版权保护是一个亟待解决的问题
% 这里需要提一下Model Reused Techniques，我们需要检测的就是模型是否是reused from other parties' models。（相当于取代了model stealing的位置，model reuse会比stealing更广）

Deep learning models, particularly deep neural networks (DNNs), have achieved remarkable success across a diverse array of applications~\cite{wang2024diagnosis, zhuang2024deepreg, zhang2025smartguard}. Developing a high-performing model is inherently expensive, demanding substantial computational resources, extensive data curation, and domain expertise throughout the model development lifecycle. Despite these immense costs, many developers release their models to the open-source community (\eg, Hugging Face) to foster academic research and educational innovation. However, the development of efficient downstream model reuse techniques, such as fine-tuning~\cite{liu2018fine} and transfer learning~\cite{zhuang2020comprehensive}, introduces severe threats to the intellectual property rights (IPR) of these assets. By exploiting these methods, malicious developers can rapidly repurpose open-source models for lucrative commercial applications without authorization. Consequently, establishing reliable mechanisms to safeguard model IPR has emerged as a critical problem in trustworthy machine learning~\cite{luan2025protecting, wei2024pointncbw, gao2025toward}.

%Deep Learning (DL) models have played a crucial role in various contemporary Artificial Intelligence (AI) applications~\cite{brown2020language, li2022opboost, liu2021secure, chu2024sora}. Given the significant cost, substantial computational resources, and human expertise required for developing a well-trained model, model reuse techniques such as fine-tuning~\cite{liu2018fine} and transfer learning~\cite{zhuang2020comprehensive} have gained popularity. Model reuse can repurpose existing models and alleviate the overhead of creating new ones. However, unauthorized model reuse, such as model extraction, poses a potential threat to the intellectual property rights (IPR) of the model developers. Detecting such misbehavior is critical to protect the IPR on the valuable models.

\begin{figure*}
% \vspace{-0.5em}
    \centering
    \includegraphics[width=1\linewidth]{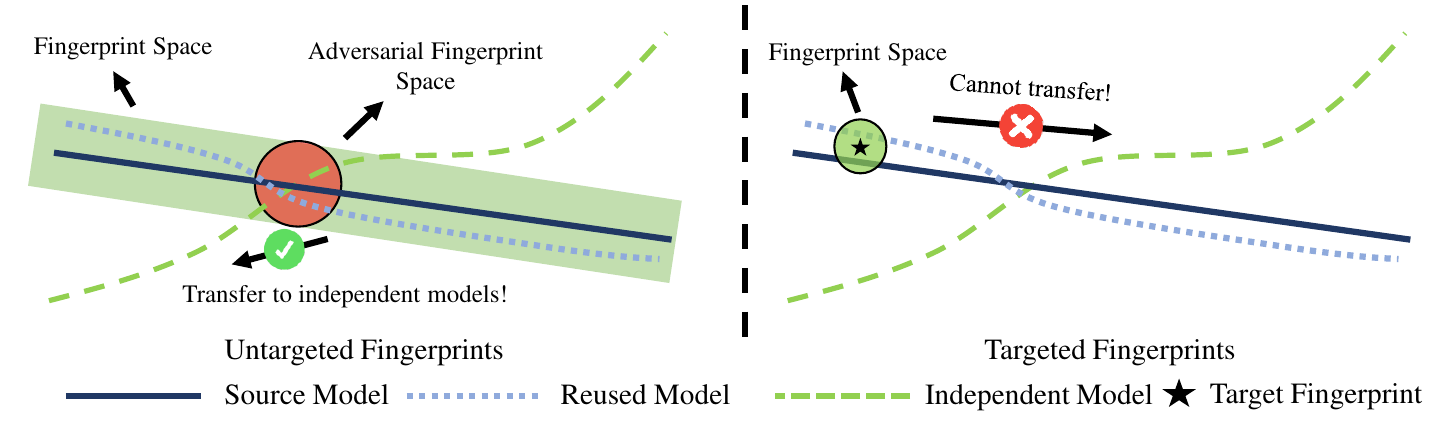}
    % \vspace{-0.8em}
    % \caption{\red{Comparison of untargeted and targeted model fingerprinting. For untargeted methods, an adversary can easily find an adversarial fingerprint space that can be transferred to independent models, leading to false claims. Targeted model fingerprints restrict the fingerprint space to be around the target fingerprint. Thus, it is hard for an adversary to find a transferable fingerprint.}}
    \caption{The comparison of untargeted and targeted fingerprinting paradigms. Untargeted methods generally compare the output of given samples. Thus, using some transferable samples can lead to false claims. Targeted fingerprinting calculates the similarity to a specific signature, which restricts the fingerprint space around the target and, therefore, mitigates false claim attacks.}
    \label{fig:untarget}
    % \vspace{-0.5em}
\end{figure*}

Currently, ownership verification is a widely adopted post-hoc approach to safeguard the IPR of model developers. This method aims to determine whether a suspicious third-party model is reused from the protected model~\cite{zhang2020passport,sun2023deep,li2025move}. Existing techniques for ownership verification generally fall into two main categories: model watermarking and model fingerprinting. Model watermarking~\cite{adi2018turning, shao2024fedtracker,li2025move} involves embedding an owner-specific signature (\ie, watermark) into the model. The model developer can then extract this watermark from the model to verify their ownership. In contrast, model fingerprinting~\cite{cao2021ipguard, li2021ModelDiff, jia2022ZestLIME, yang2022metafinger} aims to identify the intrinsic features (\ie, fingerprints) of the model rather than modifying it. A fingerprint can be represented by the outputs of specific testing samples through a particular mapping function. By comparing these fingerprints, one can check whether the suspicious model is a reused version of the source model. Arguably, model fingerprinting is more practical than model watermarking because it does not require any changes to the model's parameters, structure, or training process, and therefore has no negative impact on the model itself.

% para3: 模型指纹相关工作的介绍，分为两类，基于对抗样本和基于测试
% Existing model fingerprint methods can be classified into two categories: adversarial-example-based methods~\cite{cao2021ipguard, lukas2021DeepNeuralNetwork, wang2021characteristic, peng2022FingerprintingDeepNeural} and testing-based methods~\cite{chen2022copy, chen2022teacher, guan2022AreYouStealing, jia2022ZestLIME}. Adversarial-example-based methods assume that reused models share similar decision boundaries. As such, utilizing adversarial examples~\cite{goodfellow2014explaining, ren2020adversarial} can mark the models' decision boundaries since the adversarial examples tend to be near the decision boundaries. On the contrary, testing-based methods compare the outputs of the models on a certain metric function $M(\cdot)$. If the distance is less than a threshold, one model can be considered as being reused from the other.

% para4: 阐述模型指纹存在False Claim Attack的问题；这里需要注意的一点是模型指纹的用途不只有ownership verification，所以我们这里只claim它们在OV方面存在问题。

In this paper, we reveal that existing model fingerprinting methods, whether they are bit-wise~\cite{li2021ModelDiff} or list-wise~\cite{jia2022ZestLIME} (\ie, extracting the fingerprint bit by bit or as an entire list), are vulnerable to false claim attacks. In general, false claim attacks~\cite{liu2023false} allow adversaries to falsely claim ownership of independent third-party models by creating a counterfeit ownership certificate (\ie, a watermark or fingerprint). Specifically, these attacks can be viewed as finding transferable ownership certificates across different models, because registering a certificate with a timestamp prevents any false claims made at a later date. We show that an adversary can conduct false claim attacks by constructing transferable, "easy" samples that are correctly classified with high confidence (detailed in Section~\ref{sec:falseclaim}). Since existing fingerprinting methods typically compare the outputs of testing samples, these carefully crafted, easy samples can produce similar high-confidence outputs across various models. Consequently, this leads to independent models being misjudged as reused ones.
%Since model fingerprinting methods compare the outputs of the testing samples, we find that the adversary can construct some transferably `easy' samples that can be correctly classified with high confidence. 
%These easy samples can have similar outputs on various models, thus leading to misjudgments. 

We argue that this vulnerability primarily stems from the \emph{untargeted} nature of existing model fingerprinting methods. Specifically, they generally compare the outputs of arbitrary samples across different models, rather than measuring their similarity to specific references. This untargeted nature enlarges the space of viable fingerprints, making it easy for adversaries to find transferable samples that produce similar outputs on independent models, as illustrated in Fig.~\ref{fig:untarget}.

Motivated by these findings, we introduce a new fingerprinting paradigm, dubbed \underline{F}alse-cla\underline{I}m-resistant \underline{T}argeted model finger\underline{Print}ing (FIT-Print), where the fingerprint comparison is \emph{targeted} rather than untargeted. The key insight of FIT-Print is to shift model ownership verification from passive feature extraction to active signature alignment. Specifically, we optimize the perturbations on the testing samples so that the output of the mapping function closely aligns with a specific signature (\ie, the target fingerprint). This restricts the potential fingerprint space, significantly reducing the probability of a successful false claim attack. Based on FIT-Print, we develop two targeted model fingerprinting methods: FIT-ModelDiff and FIT-LIME, representing bit-wise and list-wise approaches, respectively. FIT-ModelDiff exploits the distances between model outputs, while FIT-LIME leverages the feature attributions of testing samples as the fingerprint.
%\red{We then propose two implementations under the FIT-Print framework, namely FIT-ModelDiff and FIT-LIME. They utilize the distances between outputs and the feature attribution of testing samples as the fingerprint}.
%Our main contributions are as follows.
%The main insight of FIT-Print is to make the model fingerprints targeted. Specifically, 

%Based on the above findings, designing a targeted model fingerprinting method is arguably the only way to defeat false claim attacks. In this paper, we propose FIT-Print, a \underline{F}alse-cla\underline{I}m-resistant \underline{T}argeted model finger\underline{Print}ing framework. 
%Feature attribution is a type of eXplainable Artificial Intelligence (XAI) method that outputs a real-valued importance score of each feature in the input sample. Specifically, by optimizing the perturbation, we turn the feature attribution of the perturbed trigger sample into an owner-specific signature. Once the model is reused by other parties, the model developer can trigger the specific explanation by adopting the feature attribution algorithm.

% para7: Contributions

Our main contributions are four-fold: 
\begin{itemize}
    \item We revisit existing model fingerprinting methods and reveal that they are generally vulnerable to false claim attacks due to their untargeted nature.
    \item We introduce a new fingerprinting paradigm (\ie, FIT-Print), which conducts ownership verification in a targeted manner using a specific reference.
    \item Based on FIT-Print, we design two black-box targeted fingerprinting methods: FIT-ModelDiff and FIT-LIME.
    \item We conduct extensive experiments on benchmark datasets and models to demonstrate their effectiveness, conferrability, and resistance to both false claims and adaptive attacks.
\end{itemize}

\section{Revisiting Existing Model Fingerprinting}
\label{sec:pre}

% 我们正式全面定义了XXX，

In this section, we first formally and comprehensively define the threat model of model fingerprinting. We then categorize existing methods into two types and describe their formulations. Subsequently, building on these definitions, we design a simple yet effective false claim attack, revealing the underlying vulnerability of current fingerprinting methods to such attacks.

\subsection{Threat Model of Model Fingerprinting}
\label{sec:threat}

In this paper, we consider three parties in our threat model, including \emph{model developer}, \emph{model reuser}, and \emph{verifier}. Arguably, including a verifier is necessary and may improve the trustworthiness of model fingerprinting, although there are currently still no mature legal provisions about this.

\partitle{Process of Model Fingerprinting} Model fingerprinting can be divided into three steps, including fingerprint generation, fingerprint registration, and ownership verification.

\begin{enumerate}
    \item \textbf{Fingerprint Generation}: The model developer trains their source model $M_o$ and generates the fingerprint of $M_o$. Broadly, a model fingerprint refers to any unique, intrinsic characteristic that allows a model to be distinguished from other independently trained models.
    \item \textbf{Fingerprint Registration}: After generating the fingerprint, the model developer registers the fingerprint and the model with a timestamp to a trustworthy third-party verifier.
    \item \textbf{Ownership Verification}: For a suspicious model $M_s$ that could be a reused version of $M_o$, the verifier will first check the timestamps of these two models. If the registration timestamp of $M_s$ is later than $M_o$, the verifier will further check whether the fingerprint of $M_o$ is similar to the fingerprint $M_s$. If so, the suspicious model can be regarded as a reused version of $M_o$.
\end{enumerate}

%One is the model developer who trains the source model and the other is the model reuser who tries to reuse the source model without authorization. 
%The model reuser's model is called the suspicious model.

\begin{figure*}
    \centering
    % \vspace{-2em}
    \includegraphics[width=1\linewidth]{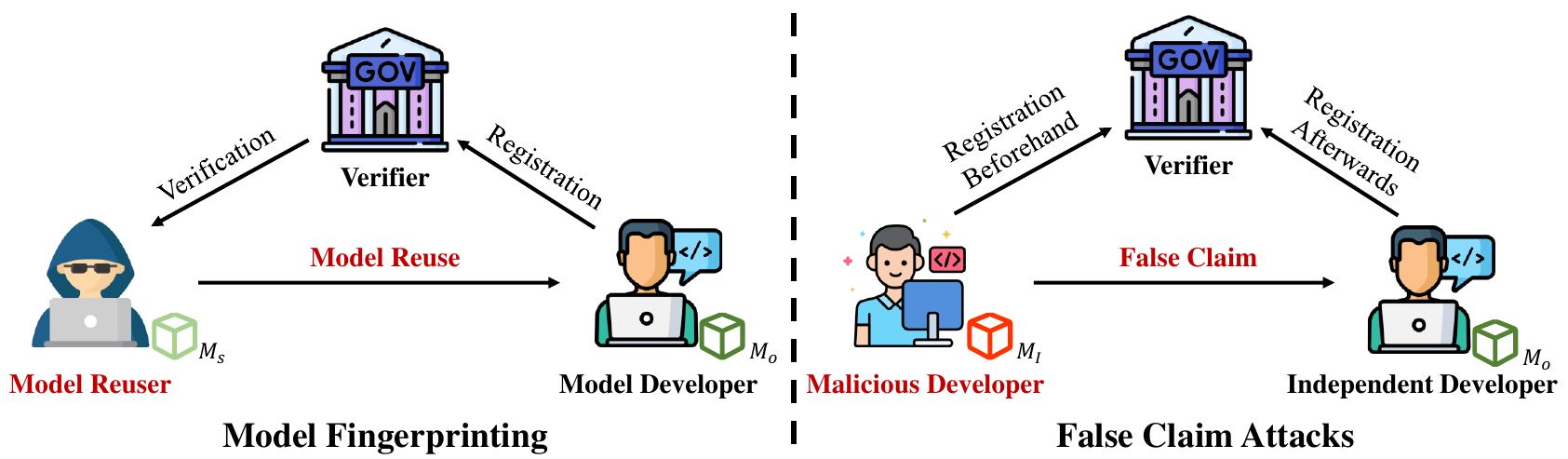}
    \caption{The threat models and detailed processes of model fingerprinting and false claim attacks. In model fingerprinting, the model developer generates and registers the model and the fingerprint with a third-party verifier. Once the model is reused by a model reuser, the verifier can determine the model ownership by comparing the fingerprints. Instead, in false claim attacks, the malicious developer attempts to register a transferable fingerprint to falsely claim other independent developers' models.}
    \label{fig:threatmodel}
    % \vspace{-1em}
\end{figure*}

\partitle{Assumptions of the Model Developer and Verifier} The model developer is the owner of the source model and can register their model and fingerprint to the trustworthy verifier with a timestamp. The verifier is responsible for fingerprint registration and verification. In case the model is reused by a model reuser, the model developer can ask the verifier for ownership verification. In particular, if two parties can simultaneously provide fingerprints and verify the ownership of a model, the fingerprint with a later timestamp will be deemed invalid. The model developer and the verifier are assumed to have \textbf{(1)} white-box access to their source model and \textbf{(2)} black-box access to the suspicious model.
%and aims to identify whether a third-party suspicious model is actually reused from its source model. The model developer is assumed to have white-box access to its source model and can register their model and its fingerprint to a trustworthy verifier with a timestamp.
%and \textbf{(2)} black-box access to the suspicious model. %which is deployed by the model reuser.

\partitle{Assumptions of the Model Reuser} Model reusers aim to avoid having their authorized reuse detected by the verifier. To achieve this, they can first modify the victim model via various techniques, such as fine-tuning, pruning, transfer learning, and model extraction, before deployment.

%\textbf{Assumptions of the Verifier.} The verifier is a trustworthy third party who is responsible for fingerprint registration and verification. In case the model is reused by a model reuser, the model developer can ask the verifier for ownership verification using the registered fingerprint. The verifier is assumed to have black-box access to the suspicious model.

\subsection{The Formulation of Existing Fingerprinting}
\label{sec:formulation}

In this section, we outline the formulations of existing fingerprinting methods to aid in the analysis and design process in the subsequent sections of this paper and follow-up research. In this paper, we focus on black-box methods since they are more practical in the real world.  In general, existing black-box model fingerprinting methods can be categorized into two types: adversarial example-based (AE-based) fingerprinting methods and testing-based fingerprinting methods. We also include a broader discussion about other fingerprinting methods in Section~\ref{sec:related}.

% \vspace{0.3em}
\partitle{AE-based Fingerprinting Methods} AE-based fingerprinting methods~\cite{cao2021ipguard, lukas2021DeepNeuralNetwork, pautov2024probabilistically} assume that the independent model has a unique decision boundary. Based on this assumption, they exploit adversarial examples (AE)~\cite{ren2020adversarial} to characterize the properties of the decision boundary of a model. 
AEs are visually similar to the benign samples but misclassified by the model~\cite{ren2020adversarial, wang2021feature}. Due to such a property, AEs tend to be near the decision boundary. Subsequently, 
AE-based fingerprinting methods validate whether the AEs are misclassified by the source model and the suspicious model. If so, the suspicious model can be treated as a reused version of the source model. The definition of the ownership verification in AE-based methods can be formulated as Definition~\ref{prop:ae-based}.

\begin{table*}[t]
% \vspace{-0.8em}
    \tabcolsep=2.5mm
    \renewcommand{\arraystretch}{1.1}
    \centering
    \caption{False claim attack against three testing-based methods. It is observed that the distances between independent models and the source model (Ind. Model Dist.) after the attack are approximately equal to or less than the average distance between the reused models and the source model (Reused Model Dist.), demonstrating the vulnerability of existing methods against false claim attacks.}
    \label{tab:falseclaim}
    \scalebox{0.85}{
    \begin{tabular}{cc|cc|cc|cc}
    \hline
    \hline
    % \toprule
    & Method$\rightarrow$  & \multicolumn{2}{c|}{ModelDiff} & \multicolumn{2}{c|}{Zest} & \multicolumn{2}{c}{SAC} \\
    & Dataset$\rightarrow$   & SDogs120   & Flowers102   & SDogs120   & Flowers102   & SDogs120   & Flowers102     \\ 
    \hline
    \multirow{2}{*}{Ind. Model Dist.} 
    & Before Attack & 0.131        & 0.114  & 0.177      & 0.161       & 0.080     & 0.094                 \\ 
    & After Attack  & 0.093        & 0.083          & 0.114      & 0.098       & 0.078     & 0.092       \\ 
    \hline
    Reused Model Dist.  & Average    & 0.108        & 0.092        & 0.095      & 0.072       & 0.079     & 0.081       \\
    \hline
    \hline
    % \bottomrule
    \end{tabular}
    }
    % \vspace{-1em}
\end{table*}

% \vspace{0.3em}
\begin{definition}[Ownership Verification of AE-based Fingerprinting Methods]
\label{prop:ae-based}
    Let $M_o$ be the source model and $M_s$ be the suspicious model, and $g(\bm{x})$ is the function that always outputs the ground-truth label of any input data $\bm{x}$. If for any $\bm{x} \in \mathcal{X}_T$ ($\mathcal{X}_T$ denotes the set of testing samples), we have
    \begin{equation}
        M_o(\bm{x})=M_s(\bm{x})\neq g(\bm{x}).
    \end{equation}
    
    The suspicious model $M_s$ can be asserted as a reused version of the source model $M_o$.
\end{definition}

%However, AE-based fingerprinting methods have several fatal drawbacks. First, AE-based methods can only apply to the models with the same task. If the model is transferred to other tasks, AE-based methods cannot detect this type of model reuse due to the difference in the output classes. Second, existing works~\cite{liu2023false} proved that AE-based methods are vulnerable to false claim attacks. We will discuss the false claim attacks in Section~\ref{sec:falseclaim}.

% \vspace{0.3em}
\partitle{Testing-based Fingerprinting Methods} Testing-based fingerprinting methods~\cite{li2021ModelDiff, jia2022ZestLIME, guan2022AreYouStealing} aim to compare the suspicious model with the source model on a specific mapping function $f(\cdot)$. If the outputs are similar, the suspicious model can be regarded as being reused from the source model. As such, the core of testing-based fingerprinting methods is how to design the mapping function $f(\cdot)$. 
The definition of ownership verification in testing-based fingerprinting methods can be formulated as Definition~\ref{prop:testing-based}.

% \vspace{0.3em}
\begin{definition}[Ownership Verification of Testing-based Fingerprinting Methods]
\label{prop:testing-based}
    Let $M_o$ be the source model and $M_s$ be the suspicious model. If for a specific mapping function $f(\cdot)$ and any $\bm{x} \in \mathcal{X}_T$ ($\mathcal{X}_T$ is the set of testing samples), we have
    \begin{equation}
        \frac{1}{|\mathcal{X}_T|}\sum_{\bm{x}\in\mathcal{X}_T}{\tt dist}(f[M_o(\bm{x})], f[M_s(\bm{x})])\leq \tau,
    \end{equation}
    where $\tau$ is a small positive threshold and ${\tt dist}(\cdot, \cdot)$ is a distance function. Then, the suspicious model $M_s$ can be asserted as reused from the source model $M_o$.
\end{definition}

% Model fingerprinting methods do not distinguish which model is the source model. Therefore, the auxiliary information is necessitated to determine. A simple yet effective way is to register the fingerprint of the model immediately after finishing the training~\cite{liu2023false, waheed2024grove}. The model with an earlier registration timestamp will be regarded as the source model.

\subsection{False Claim Attack against Model Fingerprinting}
\label{sec:falseclaim}

Existing model fingerprinting methods primarily assume that the model reuser is the adversary while paying little attention to the false claim attack~\cite{liu2023false}\footnote{The concept and definition of false claim attacks were initially introduced in~\cite{liu2023false, shao2025databench} and primarily targeted at attacking model watermarking methods.} where \textbf{the model developer is the adversary}. The formal definition of the false claim attack is as follows.

% \vspace{0.3em}
\begin{definition}
    A false claim attack refers to a malicious attempt by a malicious model developer to falsely assert the ownership of an independent model $M_I$ by registering some fraudulent testing samples $\bar{\bm{x}}$ that can pass the ownership verification of Definition~\ref{prop:ae-based} or Definition~\ref{prop:testing-based}.
\end{definition}

As depicted in Fig.~\ref{fig:threatmodel}, there are also three different parties involved in the threat model of false claim attacks, including the malicious developer, the verifier, and an independent developer. %The formal definition of false claim attacks can be found in Section~\ref{sec:falseclaim}.

% \partitle{Assumption of Malicious Developer}. 
In false claim attacks, the malicious developer is the adversary who aims to craft and register a \emph{transferable} fingerprint to falsely claim the ownership of the independent developer's model $M_I$. The malicious developer is assumed to have adequate computational resources and datasets to train a high-performance model and carefully craft transferable model fingerprints. The primary goal of the malicious developer is that the registered model fingerprints can be verified in as many other models as possible. By generating the transferable fingerprint, the malicious developer can (falsely) claim the ownership of any third-party models (that are registered later than that of the malicious developer).

\partitle{Process of False Claim Attacks} The process of false claim attacks can be divided into 3 steps, including fingerprint generation, fingerprint registration, and false ownership verification.

\begin{enumerate}%[leftmargin=*]
    \item \textbf{Fingerprint Generation}: In this step, the malicious model developer trains their source model $M_o$ and attempts to generate a \emph{transferable} fingerprint of $M_o$.
    \item \textbf{Fingerprint Registration}: After generating the fingerprint, the malicious model developer registers the \emph{transferable} fingerprint and the model with a timestamp to a trustworthy third-party verifier.
    \item \textbf{False Ownership Verification}: The malicious model developer could try to use the transferable fingerprint to falsely claim the ownership of another independently trained model $M_I$. Since the fingerprint is registered beforehand, the ownership verification won't be rejected due to the timestamp. Subsequently, the benign developer may be accused of infringement.
\end{enumerate}

%The detailed process of false claim attacks is in Appendix~\ref{sec:detailthreatfc}. 
%Some terms (\eg, ambiguity attack and false positive rate) may have a similar definition to the false claim attack. We clarify their differences in Appendix~\ref{sec:further}. 
Since registering the fingerprint with a timestamp can prevent any false claim after registration, its success hinges on generating a transferable fingerprint. For AE-based methods, \cite{liu2023false} has successfully implemented the false claim attacks by constructing transferable AEs that force diverse models to consistently misclassify them into specific wrong classes. As such, we hereby mainly focus on designing false claim attacks against cutting-edge testing-based methods. 
While our attack shares the fundamental insight of exploiting transferability, it operates in an opposite manner for testing-based fingerprinting. Instead of inducing misclassification, our primary insight is to craft transferable ``inverse-AEs'' $\bar{\bm{x}}$ as the malicious fingerprinting samples. $\bar{\bm{x}}$ are abnormally easy samples that force independent models to consistently classify them correctly with exceptionally high confidence, leading to:
%Our primary insight is to craft inverse-AEs $\bar{\bm{x}}$ as the malicious fingerprinting samples, which can be ``easily'' classified, leading to

\begin{equation}
\begin{aligned}
    &M_o(\bar{\bm{x}})\approx M_I(\bar{\bm{x}}) \\
    \Rightarrow\ & {\tt dist}(f[M_o(\bar{\bm{x}})], f[M_I(\bar{\bm{x}})]) \approx 0 \leq \tau.
\end{aligned}
\end{equation}
%Intuitively, while the metric function $f(\cdot)$ varies between different methods
%The foundation of testing-based fingerprinting methods is outlined in Proposition 2. It\textquotesingle s intuitive that while the metric function \(f(\cdot)\) varies between different methods, the parameters remain consistent. Our preliminary approach includes the construction of samples \(x_T\) that are "easily" classified, designed to achieve a confidence level near unity across various models, as illustrated by the equation:
% \begin{equation}
% M_o(x_T)\approx M_s(x_T)\approx g(x_T).
% \end{equation}
% This leads to the condition that successfully pass MOV procedure:
% \begin{equation}
% |f(M_o(x_T))-f(M_s(x_T))|\approx0\leq\tau.
% \end{equation}

To execute this strategy, motivated by the fast gradient sign method (FGSM) \cite{goodfellow2014explaining} for AE generation, we propose to leverage Eq.~(\ref{eq:fgsm}) to generate malicious fingerprinting samples, as follows:
% methods utilized in the generation of adversarial samples, which typically aim to create samples that pose classification challenges. Conversely, our goal is the opposite. Since the primary objective of this section is to prove that testing-based methods are susceptible to false claim attacks, we just adopt the simplest one. Building upon

\begin{equation}
\label{eq:fgsm}
    \bar{\bm{x}}=\bm{x}-\gamma\cdot{\tt sign}(\nabla J(M_o,\bm{x},\bm{y})),
\end{equation}
where ${\tt sign}(\cdot)$ denotes the sign function, $J(\cdot)$ represents the loss function associated with the original task of $M_o$, and $\gamma$ signifies the magnitude of the perturbation. More powerful transferable adversarial attacks can be exploited here, but we aim to show that using simple FGSM can also falsely claim to have ownership of some independent models.

\partitle{Results} We exploit 3 representative testing-based methods, \ie, ModelDiff~\cite{li2021ModelDiff}, Zest~\cite{jia2022ZestLIME}, and SAC~\cite{guan2022AreYouStealing}, to validate the effectiveness of our false claim attacks. 
%The complete results can be found in Appendix~\ref{sec:full-falseclaim}. 
As shown in Table~\ref{tab:falseclaim}, SAC is poor at identifying models of the same tasks, even without attacks. Moreover, after attacks, the distances between the source model $M_o$ and the independent model $M_I$ of all three methods are approximately equal to or less than the average distances between reused models and the source model. It indicates that $M_I$ will be asserted as reused from $M_o$, which is a false alarm. The results demonstrate that existing model fingerprinting methods are vulnerable to false claim attacks.

% Here, \(\text{sign}(\cdot)\) denotes the sign function, \(J(\cdot)\) represents the loss function associated with \(M_o\)\textquotesingle s original task, and \(\varepsilon\) signifies the magnitude of the perturbation.

% Part 3: 实验结果
% \textbf{Experiments.} We employed the method described above to conduct
% false claim attacks against three testing-based methodologies: Zest\cite{jia2022ZestLIME}, SAC\cite{guan2022AreYouStealing}, and ModelDiff\cite{li2021ModelDiff}. The \emph{Average Positive Distance} was determined by measuring several groups of positive models for the corresponding tasks. \emph{Negative Distance} gauges the distance between pairs of negative models both before and after the attack. As illustrated in Table 1, when employing easy samples, the negative distance for both Zest and ModelDiff was significantly reduced, approaching or even falling below the average positive distance. In the case of SAC, although there wasn\textquotesingle t a marked reduction in distance, many pairs of negative models were already close to or below the average positive distance before deploying the easy sample attack, so we were still able to perform a false claim attack when using the SAC method.

\begin{figure*}[t]
    \centering
    % \vspace{-0.8em}
\includegraphics[width=1\linewidth]{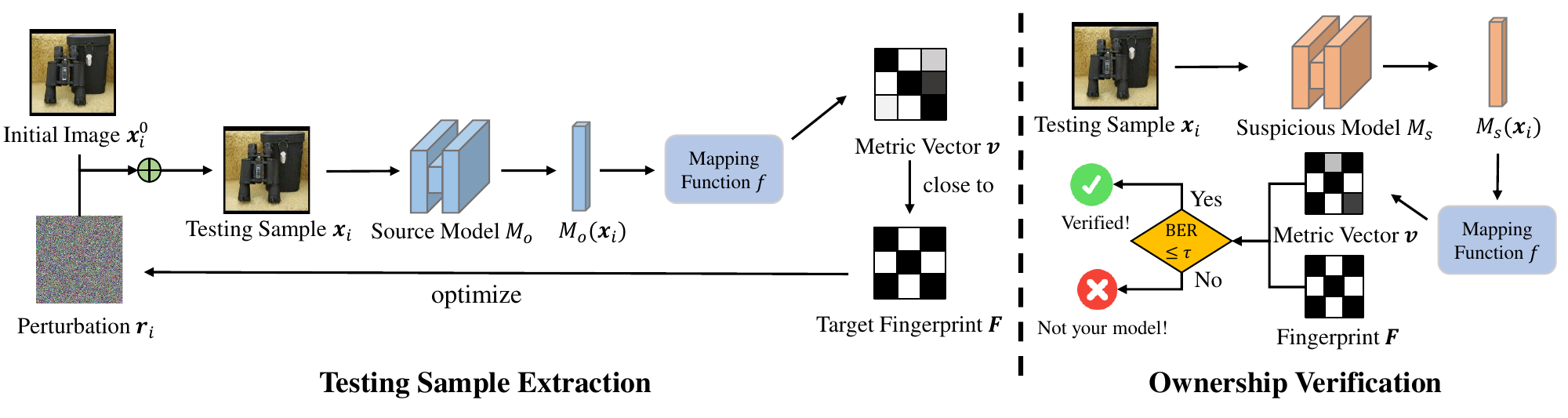}
% \vspace{-1em}
    \caption{The pipeline of FIT-Print. In testing sample extraction, FIT-Print optimizes the perturbations to turn the fingerprint vector close to the target fingerprint. In the ownership verification stage, FIT-Print extracts the fingerprint from the suspicious model and compares it with the original fingerprint.}
    \label{fig:pipeline}
    % \vspace{-1em}
\end{figure*}

\section{Methodology}
\label{sec:method}

\subsection{Design Objectives}
\label{sec:objective}

% Effectiveness, Conferrability, Non-transferability, or Resistance to False Claim Attack
%In this section, we present the design objectives of the model fingerprinting methods for ownership verification. 
The objectives of model fingerprinting methods can be summarized in three-fold, including effectiveness, conferrability, and resistance to false claim attacks.

\begin{itemize}
    \item \textbf{Effectiveness:} Effectiveness indicates that the model developer can successfully verify the ownership of the source model using the fingerprinting method.
    \item \textbf{Conferrability:} Conferrability requires the model fingerprint to be conferrable to models reused from the source model. In other words, the fingerprints of the source and reused models should be highly similar.
    \item \textbf{Resistance to False Claim Attacks:} This requires that independently trained models have distinctly different fingerprints. Additionally, it prevents malicious developers from constructing a transferable fingerprint that can be extracted from independent models.
\end{itemize}

\subsection{The Insight of our FIT-Print}
% \subsection{The Main Pipeline of FIT-Print}
\label{sec:framework}
% 这里叫main pipeline不是特别合适，因为我这里实际上没有在介绍pipeline，介绍的是FIT-Print的一个核心命题：我们检验的是模型提取出的指纹是否接近一个特定的signature，所以叫insight或者motivation会更合适一点。

% Part 1: 总结一下之前讨论的结果，阐述我们的主要思想。
As discussed in Section~\ref{sec:falseclaim}, existing model fingerprinting methods are vulnerable to false claim attacks. We argue that the vulnerability stems primarily from the ``untargeted'' characteristic of the fingerprinting methods. The untargeted characteristic leads to a large fingerprint space that can accommodate transferable adversarial fingerprints. In this paper, we propose FIT-Print, a targeted model fingerprinting framework to mitigate false claim attacks. Our main insight is that although it is tough to find the space that can only transfer among reused models, we can turn the fingerprint into a target one to restrict the fingerprinting space and reduce the adversarial transferability of fingerprints.

Given a mapping function $f(\cdot)$ and a target fingerprint $\bm{F}$, our goal is to make the fingerprint vector $\bm{v}=f(M_s(\bm{x}))$ close to $\bm{F}$. Accordingly, the definition of our FIT-Print can be defined as follows.

\vspace{0.3em}
\begin{definition}
\label{prop:fitprint}
    Let $M_s$ be the suspicious model. If for a specific mapping function $f(\cdot)$ and testing sample $\bm{x} \in \mathcal{X}_T$ ($\mathcal{X}_T$ is the set of testing samples), we have
    
    \begin{equation}
    \label{eq:target}
        \frac{1}{|\mathcal{X}_T|}\sum_{\bm{x}\in\mathcal{X}_T}{\tt dist}(f[M_s(\bm{x})], \bm{F})\leq \tau,
    \end{equation}
    where $\tau$ is a small threshold and ${\tt dist}(\cdot, \cdot)$ is a distance function, the suspicious model $M_s$ can be asserted as reused from the owner of the fingerprint $\bm{F}$.
\end{definition}

In FIT-Print, we assume that the target fingerprint $\bm{F}\in \{-1, 1\}^k$ is a binary vector consisting of $-1$ or $1$, and we can get the output logits of $M_s(\bm{x})$. The discussion on the label-only scenario, where we can only get the Top-1 label, can be found in Section~\ref{sec:label-only}. We assume that the target fingerprint needs to be registered with a third-party institution. Specifically, in the registration stage, a model developer utilizing FIT-Print registers three items to the trustworthy verifier with a timestamp: the source model ($M_o$), the optimized testing samples ($\mathcal{X}_T$), and the pre-defined target fingerprint ($\bm{F}$). Unlike conventional untargeted methods that passively register arbitrary output vectors as fingerprints, FIT-Print actively registers a semantic, specific target fingerprint (\eg, a logo) that the optimized samples are mathematically constrained to produce.
%For instance, $\bm{F}$ can be the ID number or the logo of the model developer. The signature $\bm{F}$ can be transformed into a $k$-bit string by assigning $0$ to $-1$. 

Generally, as shown in Fig.~\ref{fig:pipeline}, FIT-Print can be divided into two stages: testing sample extraction and ownership verification. %The technical details are described as follows.

\subsection{Testing Sample Extraction}
\label{sec:extraction}

In the testing sample extraction stage, we aim to find the optimal testing sample set $\mathcal{X}_T$ to make any reused models satisfy Eq.~(\ref{eq:target}) in Definition~\ref{prop:fitprint}. Therefore, in FIT-Print, we first initialize the testing samples $\mathcal{X}_T$ and the corresponding perturbations $\mathcal{R}$. We denote the $i$-th element in $\mathcal{X}_T$ and $\mathcal{R}$ as $\bm{x}_i$ and $\bm{r}_i$ respectively. The element $\bm{x}_i$ is set to an initial value $\bm{x}_i^0$, and we can initialize the testing samples to any images.
%and the perturbation $\bm{r}_i \in \mathcal{R}$ is set to zero. 
The testing samples in $\mathcal{X}_T$ can be constructed by adding the perturbations to the initial values, \ie, $\bm{x}_i=\bm{x}_i^0 + \bm{r}_i$. After that, we need to optimize the perturbations $\mathcal{R}$ to make the fingerprint vector $\bm{v}$ close to the target fingerprint $\bm{F}$. We can define the testing sample extraction as an optimization problem, which can be formalized as follows.

\begin{equation}
\label{eq:loss}
    \min_{\mathcal{R}=\{\bm{r}_1, ..., \bm{r}_{|\mathcal{R}|}\}}\frac{1}{|\mathcal{X}_T|}\sum_{i=1}^{|\mathcal{X}_T|}[\mathcal{L}(f(M_o(\bm{x}_i^0+\bm{r}_i), \bm{F})+\lambda\cdot\|\bm{r}_i\|_2],
\end{equation}
where $\|\cdot\|_2$ calculates the $\ell_2$-norm. 
% The optimization loss function in Eq.~(\ref{eq:loss}) can be regarded as two terms. 
The first term in Eq.~(\ref{eq:loss}) quantifies the dissimilarity between the output fingerprint vector $\bm{v}$ and the target fingerprint $\bm{F}$. 
%Optimizing the first term can make the metric vector $\bm{v}$ similar to the target fingerprint $\bm{F}$. 
The second term regularizes the extent of the perturbations $\mathcal{R}$. We utilize the hinge-like loss~\cite{fan2019rethinking} as $\mathcal{L}(\cdot)$, as follows.
%. The hinge-like loss is shown as follows.
\begin{equation}
\label{eq:hinge-like}
    \mathcal{L}(\bm{v}, \bm{F})=\sum_{i=1}^k \max(0, \varepsilon - \bm{v}_i\cdot\bm{F}_i).
\end{equation}

In Eq.~(\ref{eq:hinge-like}), $\bm{v}$ is the fingerprint vector, where $\bm{v}_i=f[M_s(\bm{x}_i)]$, and $\varepsilon$ is the control parameter. $\bm{F}_i$ is the $i$-th element in $\bm{F}$. Optimizing Eq.~(\ref{eq:hinge-like}) can make the signs of the corresponding elements in $\bm{v}$ and $\bm{F}$ the same. Moreover, inspired by the insight of \cite{lukas2021DeepNeuralNetwork}, we craft some augmented models by applying model reuse techniques (\eg, fine-tuning, pruning, or transfer learning) and exploit them to extract the fingerprint to improve the conferrability of FIT-Print. The set of augmented models is denoted as $\mathcal{M}$. The loss function with augmented models can be defined as Eq.~(\ref{eq:augu}).
\begin{equation}
    \label{eq:augu}
    \min_{\mathcal{R}=\{\bm{r}_1, ..., \bm{r}_{|\mathcal{R}|}\}}\frac{1}{|\mathcal{M}|\cdot|\mathcal{X}_T|}\sum_{M\in\mathcal{M}}\sum_{i=1}^{|\mathcal{X}_T|}[\mathcal{L}(\bm{v}, \bm{F})+\lambda\cdot\|\bm{r}_i\|_2].
\end{equation}
% \begin{equation}
%     \label{eq:augu}
%     \min_{\mathcal{R}=\{\bm{r}_1, ..., \bm{r}_{|\mathcal{R}|}\}}\frac{1}{|\mathcal{M}|\cdot|\mathcal{X}_T|}\sum_{M\in\mathcal{M}}\sum_{i=1}^{|\mathcal{X}_T|}[\mathcal{L}(f(M(\bm{x}_i^0+\bm{r}_i), \bm{F})+\lambda\cdot\|\bm{r}_i\|_2].
% \end{equation}

By optimizing Eq.~(\ref{eq:augu}), we can get the optimal testing samples that are conferrable to reused models, and the model developer can afterward utilize them to verify the ownership.

\subsection{Ownership Verification}
\label{sec:ov}

In the ownership verification stage, given a suspicious model $M_s$, FIT-Print examines whether the suspicious model $M_s$ is reused from the source model $M_o$ by justifying whether $M_s$ satisfies Eq.~(\ref{eq:target}). Specifically, we first calculate the fingerprint vector $\tilde{\bm{v}}$ of the suspicious model $M_s$ using the extracted testing samples in $\mathcal{X}_T$. Each element $\tilde{\bm{v}}_i = f(M_s(\bm{x}_i^0 + \bm{r}_i))$.
% \begin{equation}
%     \label{eq:extract}
%     \tilde{\bm{v}}_i = f(M_s(\bm{x}_i^0 + \bm{r}_i)), i=1, 2, ..., |\mathcal{X}_T|.
% \end{equation}
Since optimizing Eq.~(\ref{eq:augu}) makes the signs of the fingerprint vector $\tilde{\bm{v}}$ represent the fingerprint $\tilde{\bm{F}}$ of the model, we need to transform $\tilde{\bm{v}}$ into a binary vector by applying the sign function ${\tt sign}(\cdot)$ to get $\tilde{\bm{F}}$, as Eq.~(\ref{eq:binarize}).

\begin{equation}
    \label{eq:binarize}
    \tilde{\bm{F}}_i={\tt sign}(\tilde{\bm{v}}_i)=\left\{
    \begin{aligned}
        1,&\enspace \tilde{\bm{v}}_i \geq 0 \\
        -1,&\enspace \tilde{\bm{v}}_i < 0
    \end{aligned}
    \right..
\end{equation}

Subsequently, we leverage the bit error rate (BER) as the distance function ${\tt dist}(\cdot)$ in Eq.~(\ref{eq:target}) and the BER is the distance between the extracted and the target fingerprints, as follows.
\begin{equation}
    \label{eq:ber}
    {\tt BER} = \frac{1}{k}\sum_{i=1}^k\mathbb{I}\{\tilde{\bm{F}}_i\neq \bm{F}_i\},
\end{equation}
where $k$ is the length of the fingerprint and $\mathbb{I}\{\cdot\}$ is the indicator function defined as Eq. \eqref{eq:indicator}. 
\begin{equation}
    \label{eq:indicator}
    \mathbb{I}\{A\} = \left\{\begin{aligned}
        &0, A = \text{False} \\
        &1, A = \text{True}
    \end{aligned}
    \right..
\end{equation}
As Definition~\ref{prop:fitprint}, if the BER is lower than the threshold $\tau$, the suspicious model $M_s$ can be asserted as a reused model. For choosing the threshold $\tau$ to reduce false alarms and resist false claim attacks, we have Proposition~\ref{theorem:threshold}.

% \vspace{0.3em}
\begin{proposition}
    \label{theorem:threshold}
    Given the security parameter $\kappa$ and the fingerprint $\bm{F}\in\{-1, 1\}^k$, if $\tau$ satisfies that
    \begin{equation}
    \label{eq:binorm}
        \sum_{d=0}^{\lfloor \tau k \rfloor}\tbinom{k}{d} (\frac{1}{2})^k\leq \kappa,
    \end{equation}
    where $\tbinom{k}{d}=k!/[d!(k-d)!]$, the probability of a false alarm (\ie, the BER is less than $\tau$ with random testing samples) is bounded by $\kappa$.
\end{proposition}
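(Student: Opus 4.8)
The plan is to reduce the false-alarm probability to a binomial tail and then invoke the stated inequality directly. The crucial modeling step is to pin down what ``random testing samples'' produce at the bit level. For testing samples that are not optimized toward the target fingerprint $\bm{F}$ (equivalently, for a fingerprint vector extracted with no knowledge of $\bm{F}$), I would argue that each coordinate $\tilde{\bm{v}}_i=f[M_s(\bm{x}_i)]$ is symmetrically distributed about zero, so that after binarization via Eq.~(\ref{eq:binarize}) each extracted bit $\tilde{\bm{F}}_i$ is uniform on $\{-1,1\}$ and independent of the target bit $\bm{F}_i$. Under this model the event $\{\tilde{\bm{F}}_i\neq\bm{F}_i\}$ is a fair coin flip, and the $k$ coordinates are treated as independent.

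Next I would introduce the mismatch count $D=\sum_{i=1}^{k}\mathbb{I}\{\tilde{\bm{F}}_i\neq\bm{F}_i\}$, so that ${\tt BER}=D/k$. Under the coin-flip model, $D$ follows a $\mathrm{Binomial}(k,\tfrac{1}{2})$ distribution. Since $D$ is integer-valued, the false-alarm event ${\tt BER}\leq\tau$ coincides exactly with $\{D\leq\lfloor\tau k\rfloor\}$. I would then evaluate the false-alarm probability as the corresponding binomial tail,
\begin{equation}
\Pr[{\tt BER}\leq\tau]=\Pr[D\leq\lfloor\tau k\rfloor]=\sum_{d=0}^{\lfloor\tau k\rfloor}\tbinom{k}{d}\left(\frac{1}{2}\right)^k,
\end{equation}
which is precisely the left-hand side of Eq.~(\ref{eq:binorm}). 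The hypothesis on $\tau$ then gives $\Pr[{\tt BER}\leq\tau]\leq\kappa$, completing the argument.

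The hard part is the justification of the uniform-and-independent bit model, since everything else is an identity followed by the hypothesis. Strictly speaking this is a modeling assumption rather than a statement about arbitrary models: it holds cleanly when the signs of the $k$ output coordinates behave like independent symmetric random variables. I would support it by observing that, absent any optimization toward $\bm{F}$, there is no reason for $\tilde{\bm{v}}_i$ to be biased toward the sign $\bm{F}_i$, so $\Pr[\tilde{\bm{F}}_i=\bm{F}_i]=\tfrac{1}{2}$ by symmetry, and that distinct coordinates capture (nearly) independent features of the output, so the joint law factorizes approximately. If one prefers to dispense with exact independence, a Hoeffding-type concentration bound would still yield a comparable but looser tail estimate; however, to reproduce Eq.~(\ref{eq:binorm}) verbatim, the exact $\mathrm{Binomial}(k,\tfrac{1}{2})$ model is the natural and intended route.
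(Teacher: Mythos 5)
Your proposal is correct and follows essentially the same route as the paper's proof: both reduce the false-alarm event to the tail of a $\mathrm{Binomial}(k,\tfrac{1}{2})$ mismatch count under the assumption that each extracted bit matches the target bit independently with probability $\tfrac{1}{2}$, and then apply the hypothesis on $\tau$. Your discussion of the uniform-and-independent bit model is in fact more explicit than the paper's, which simply asserts the $\tfrac{1}{2}$ matching probability from the adversary's lack of knowledge of the independent model.
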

%We propose an analysis on how to choose the threshold $\tau$ in the appendix.
The proof of Proposition~\ref{theorem:threshold} can be found in the appendix. We also conduct an empirical evaluation on the resistance of FIT-Print against adaptive false claim attacks in Section~\ref{sec:resist}.

\subsection{Designing the Mapping Function in FIT-Print}
\label{sec:implementation}

In Section~\ref{sec:extraction}-\ref{sec:ov}, we introduced the main pipeline and paradigm of our FIT-Print. The key to implementing FIT-Print is to design the mapping function $f(\cdot)$. We hereby illustrate how to leverage the paradigm of FIT-Print and design two targeted model fingerprinting methods, including FIT-ModelDiff and FIT-LIME, as the representatives of bit-wise and list-wise methods, respectively.

\subsubsection{FIT-ModelDiff}
\label{sec:fit-modeldiff}

FIT-ModelDiff is a bit-wise fingerprinting method that extracts the fingerprint bit by bit. The main insight of FIT-ModelDiff is to compare the distance between the output logits of perturbed samples $\bm{x}_i^0+\bm{r}_i$ and benign samples $\bm{x}_i^0$. The vector of the distances is called the decision distance vector (DDV). Given the suspicious model $M_s$, DDV can be calculated as follows:

\begin{equation}
\label{eq:ddv}
\begin{aligned}
    {\tt DDV}_i&={\tt cos\_sim}(M_s(\bm{x}_i^0+\bm{r}_i), M_s(\bm{x}_i^0)) \\
    &=\frac{M_s(\bm{x}_i^0+\bm{r}_i)\cdot M_s(\bm{x}_i^0)}{\|M_s(\bm{x}_i^0+\bm{r}_i)\|\cdot \|M_s(\bm{x}_i^0)\|}.
\end{aligned}
\end{equation}
 
${\tt DDV}_i$ represents the $i$-th element in the DDV and ${\tt cos\_sim}(\cdot, \cdot)$ is the cosine similarity function. Since the output logits after softmax are always positive, the range of the DDV is $[0, 1]$. As proposed in Section~\ref{sec:extraction}, we aim to make the sign of the fingerprint vector $\bm{v}$ the same as the target fingerprint $\bm{F}$. Therefore, to achieve this goal, we need to subtract a factor from DDV to make the range of $\bm{v}$ including both positive and negative values, as Eq.~(\ref{eq:ddvm}).

\begin{equation}
    \label{eq:ddvm}
    \begin{aligned}
    \bm{v}_i&=f(M_s(\bm{x}_i^0+\bm{r}_i), M_s(\bm{x}_i^0))={\tt DDV}_i - {\tt cos}(\alpha)\\
    &=\frac{M_s(\bm{x}_i^0+\bm{r}_i)\cdot M_s(\bm{x}_i^0)}{\|M_s(\bm{x}_i^0+\bm{r}_i)\|\cdot \|M_s(\bm{x}_i^0)\|}-{\tt cos}(\alpha),
    \end{aligned}
\end{equation}
where ${\tt cos}(\cdot)$ is the cosine function and $\alpha$ is the bias parameter. The final fingerprint vector $\bm{v}$ can be used for testing sample extraction or ownership verification.

\subsubsection{FIT-LIME}
\label{sec:fit-lime}

FIT-LIME is a list-wise method that extracts the fingerprint as a whole list. FIT-LIME implements the mapping function $f(\cdot)$ via a popular feature attribution algorithm, local interpretable model-agnostic explanation (LIME)~\cite{ribeiro2016should}. LIME outputs a real-value importance score for each feature in the input sample $\bm{x}$. 
%An existing fingerprinting method, Zest~\cite{jia2022ZestLIME}, utilizes primitive LIME to compare different models. However, primitive LIME has two drawbacks in ownership verification: \textbf{(1)} LIME first clusters the pixels into several groups called superpixels via Quickshift~\cite{vedaldi2008quick}. The clustering algorithm is time-consuming and these superpixels are irregular and unordered, making it hard to transform them into a bit string. \textbf{(2)} Primitive LIME depends on the label of the input to calculate the importance scores, which is not applicable when the suspicious model has different predicted classes from the source model. Therefore, 
We enhance the LIME algorithm and develop FIT-LIME to better cater to the needs of ownership verification. The details of FIT-LIME are elaborated as follows.

\begin{table*}[t]
% \vspace{-0.8em}
    \tabcolsep=2.5mm
    \renewcommand{\arraystretch}{1.1}
    \centering
    \caption{Successful ownership verification rates of different model fingerprinting methods. ``\#Models'' denotes the number of reused models, and the ``N/A'' indicates that the method cannot be applied to detect this type of model reuse technique. In particular, we mark failed cases (\ie, $<80\%$ or ``N/A'') in red. Moreover, the BERs of FIT-Print are all 0.0\%.}
    \label{tab:baseline}
    \scalebox{0.85}{
    \begin{tabular}{cc|cc|ccc|c|cc}
    \hline
    \hline
    % \toprule
        \multirow{2}{*}{Reuse Task$\downarrow$} & \multirow{2}{*}{\#Models$\downarrow$} & \multicolumn{2}{c|}{AE-based} & \multicolumn{3}{c|}{Testing-based} & White-box & \multicolumn{2}{c}{FIT-Print}\\
         & & IPGuard & MetaV & ModelDiff & Zest & SAC & ModelGiF & FIT-ModelDiff & FIT-LIME \\
         \hline
         Copying & 4 & 100\% & 100\%  & 100\% & 100\% & 100\% & 100\% & 100\% & 100\%\\
         Fine-tuning & 12&  100\% & 100\%  & 100\% & 100\% & 100\% & 100\% & 100\% & 100\%\\
         Pruning & 12& 100\% & 100\% &  100\%  & 91.67\% & 100\% & 100\% & 100\% & 100\%\\
         Extraction & 8& \red{50\%} & 87.5\% &  \red{50\%}  & \red{25\%} & 100\% & 100\% & 100\% & 100\%\\
         Transfer & 12& \red{N/A} & \red{N/A}  & 100\% & \red{N/A} & \red{0\%} & 100\% & 100\% & 100\%\\
         \hline
         Independent & 144 & 30.6\% & 4.8\% & 4.0\% & 7.6\% & 39.6\% & 0.0\% & 0.0\% & 0.0\%\\
         \hline
         \hline
         % \bottomrule
    \end{tabular}
    }
    \vspace{-1em}
\end{table*}

The first step of FIT-LIME is to generate $c$ samples that are neighboring to the input image $\bm{x}$. We also gather the adjacent pixels in the image into a superpixel. We uniformly segment the input space into $k$ superpixels, where $k$ is the length of the targeted fingerprint. Assuming that $k=\mu\times \nu$, the image can be divided into $\mu$ rows and $\nu$ columns. 
Each superpixel represents a rectangular region that has $\lceil w/\mu \rceil \times \lceil h/\nu \rceil$ pixels in the image. 
Then, we randomly generate $c$ masks where each mask is a $k$-dimension binary vector, constituting $\bm{A}\in\{0, 1\}^{c\times k}$. Each element in each row of the matrix $\bm{A}$ corresponds to a superpixel in the image $\bm{x}$. After that, we exploit the binary matrix to mask the image $\bm{x}$ and generate the masked examples $\mathcal{X}^m$. If the element in the $i$-th row of the mask $\bm{A}$ is 1, the corresponding superpixel preserves its original value. Otherwise, the superpixel is aligned with 0. Each row of the mask can generate a masked sample, and the $c$ masked samples constitute the masked sample set $\mathcal{X}^m$.

The second step is to evaluate the output of the masked samples $\mathcal{X}^m$ on the suspicious model $M_s$. Different from primitive LIME, we utilize the entropy of the outputs so that it no longer depends on the label of $\bm{x}$. The intuition is that if the important features are masked, the prediction entropy will significantly increase. Following this insight, we calculate the following equation in this step.

\begin{equation}
    \label{eq:entropy}
    \bm{p}_i=H[M_s(\mathcal{X}^m_i)],
\end{equation}

In Eq. (\ref{eq:entropy}), $H(\cdot)$ calculates the entropy, $\bm{p}_i$ is the $i$-th element in $\bm{p}$, and $\mathcal{X}^m_i)$ is the $i$-th masked sample.

After that, the final step is to fit a linear surrogate model to approximate the relationship between the binary masks $\bm{A}$ and the corresponding prediction entropies $\bm{p}$. Specifically, following \cite{shao2024explanation}, we optimize an ordinary least squares objective to find the feature importance vector $v$ that minimizes the approximation error:
\begin{equation}
\label{eq:linear_loss}
    \min_{\bm{v}}\|\bm{A}\bm{v}-\bm{p}\|_2^2.
\end{equation}

Eq.~\eqref{eq:linear_loss} can be solved using the well-known Normal Equation. This yields the importance score vector $\bm{v}$ as formulated in Eq.~\eqref{eq:linear}:
\begin{equation}
    \label{eq:linear}
    \bm{v}=(\bm{A}^T\bm{A})^{-1}\bm{A}^T\bm{p}.
\end{equation}

This importance score vector $\bm{v}$ is then utilized as the fingerprint vector in both the testing sample extraction and ownership verification stages.

\section{Experiments}
\label{sec:exp}

% TODO:这里总结一下我们正文和附录做的实验
In this section, we evaluate the effectiveness, conferrability, and resistance to the false claim attack of our FIT-Print methods. We also include ablation studies on the hyperparameters, different targets, initializations, and different numbers of augmented models in FIT-Print. 
% We also discuss applying FIT-Print in the label-only scenario and to other models and datasets. The analysis of the overhead of FIT-ModelDiff and FIT-LIME can be found in Appendix.

\begin{figure*}
    \centering
    \includegraphics[width=0.80\linewidth]{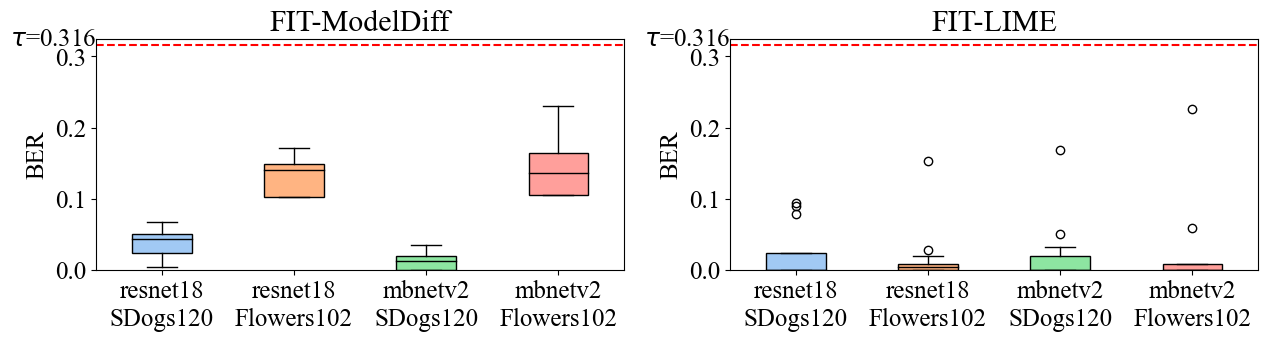}
    % \vspace{-1em}
    \caption{The BERs of different source models and their reused models with FIT-ModelDiff and FIT-LIME. The BERs are all less than the threshold $\tau$ marked with a red dashed line, indicating that FIT-ModelDiff and FIT-LIME can successfully recognize the reused models.}
    \label{fig:berbox}
    \vspace{-1em}
\end{figure*}

\subsection{Experimental Settings}
\label{sec:settings}

\partitle{Models and Datasets} Following prior works~\cite{li2021ModelDiff, jia2022ZestLIME}, we utilize two widely-used convolutional neural network (CNN) architectures, MobileNetV2~\cite{sandler2018mobilenetv2} (mbnetv2 for short) and ResNet18~\cite{he2016deep}, in our experiments. We train MobileNetV2 and ResNet18 using two different datasets, Oxford Flowers 102 (Flowers102)~\cite{nilsback2008automated} and Stanford Dogs 120 (SDogs120)~\cite{khosla2011novel}, in total 4 source models. Flowers102 contains RGB images of 102 different flowers, while SDogs120 has images of 120 different breeds of dogs. We primarily focus on image classification models in our experiments. In particular, we provide a case study about implementing FIT-Print to text generation models in Section~\ref{sec:case}.
%Flowers102 contains RGB images of 102 different flowers, while SDogs120 has images of 120 different breeds of dogs. 
We utilize the data from ImageNet~\cite{deng2009imagenet} as the default initial images of the testing samples $\mathcal{X}_T$. 
%We also conduct experiments on different initializations in the appendix.
We also utilize the models pre-trained on ImageNet~\cite{deng2009imagenet} as the models trained by other parties, \ie, negative models.

\partitle{Model Reuse Techniques} We evaluate FIT-Print against the following five categories of model reuse techniques, including copying, fine-tuning, pruning, model extraction, and transfer learning. We further consider different implementations of these model reuse techniques in various settings and scenarios. 
%In our experiments, we have four source models (MobileNetV2 and ResNet18 trained on Flowers102 and SDogs120 datasets respectively). 
For each source model, we train and craft three fine-tuning models, three pruning models, two extraction models, and three transfer learning models. These 12 models constitute the set of reused models. When experimenting on one source model, the other 36 models that are reused from other source models are treated as independent models. %More details can be found in Appendix~\ref{sec:reusedetails}.

\begin{figure*}[t]
% \vspace{-0.4em}
    \centering
    \includegraphics[width=0.80\linewidth]{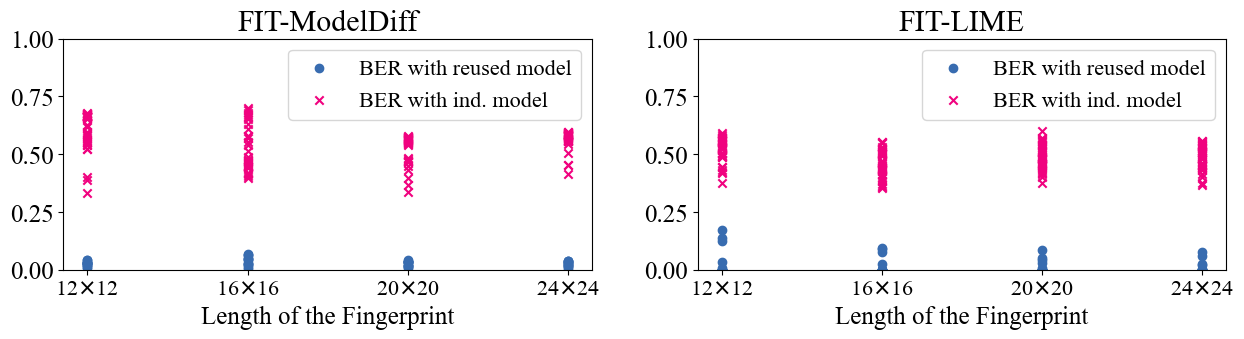}
    % \vspace{-1em}
    \caption{The BERs of the reused models and independent models with different lengths of fingerprint. As the length increases, the BERs with reused and independent models become more concentrated.}
    \label{fig:fp-length}
    % \vspace{-0.5em}
\end{figure*}

\begin{table*}[t]
    \renewcommand{\arraystretch}{1.1}
    \centering
    \caption{The average distances of reused models (Avg. Reused Model Dist.) and independent models (Avg. Ind. Model Dist.) and the $\ell_2$-norm of the perturbations $\mathcal{R}$ ($\ell_2$-norm of Pert.) with different $\lambda$.}
    \label{tab:lambda}
    \scalebox{0.82}{
    \begin{tabular}{c|ccccc|ccccc}
    \hline
    \hline
    % \toprule
    Method$\rightarrow$ & \multicolumn{5}{c|}{FIT-ModelDiff} & \multicolumn{5}{c}{FIT-LIME} \\
    Metric $\downarrow$ $\lambda \rightarrow$  &  0.0 & 1.0 & 5.0 & 10.0 & 100.0 & 0.0 & 0.5 & 1.0 & 2.0 & 5.0\\
    \hline
    Avg. Reused Model Dist. & 0.024 & 0.038 & 0.030 & 0.032 & 0.029 & 0.029 & 0.029 & 0.034 & 0.036 & 0.047\\
    Avg. Ind. Model Dist. & 0.568 & 0.570 & 0.561 & 0.566 & 0.577 & 0.505 & 0.505 & 0.510 & 0.506 & 0.512\\
    $\ell_2$-norm of Pert. & 0.007 & 0.007 & 0.007 & 0.007 & 0.006 & 0.020 & 0.019 & 0.018 & 0.017 &  0.014\\
         \hline
         \hline
         % \bottomrule
    \end{tabular}
    }
    \vspace{-1em}
\end{table*}

\partitle{Baseline Methods} 
We consider both AE-based and testing-based fingerprinting methods. 
For the former, we implement two typical methods, IPGuard~\cite{cao2021ipguard} and MetaV~\cite{pan2022metav}. While for the latter, we take three different methods,  ModelDiff~\cite{li2021ModelDiff}, Zest~\cite{jia2022ZestLIME}, and SAC~\cite{guan2022AreYouStealing} as the baseline methods. We also include a state-of-the-art (SOTA) white-box model fingerprinting method, \ie, ModelGiF~\cite{song2023modelgif}, for reference.

\partitle{Target Fingerprint} As default, we select a logo of a file and a pen as the targeted fingerprint $\bm{F}$. We set the default length $k$ of the fingerprint $\bm{F}$ to be 256 and thus $\bm{F}$ is resized to 16$\times$16. We set the security parameter $\kappa=10^{-9}$. According to Eq.~(\ref{eq:binorm}), the threshold $\tau$ is $0.316$ in our experiments.

\partitle{Optimization Details} We utilize the stochastic gradient descent (SGD) with momentum as the optimizer. We set the initial learning rate to $1.2\times 10^{-2}$, the momentum to $0.9$, and the weight decay to $5\times 10^{-4}$. We apply a cosine annealing schedule~\cite{loshchilov2016sgdr} to reduce the learning rate gradually to a minimum of $4\times 10^{-3}$. Following~\cite{shao2024explanation}, we set the control parameter $\varepsilon$ in the hinge-like loss in Eq.~(\ref{eq:hinge-like}) to $0.01$. We optimize the perturbations on the testing samples for 300 epochs. In Eq.~(\ref{eq:ddvm}) of FIT-ModelDiff, we set the bias parameter $\alpha$ to be $\pi / 8$.

\partitle{Computational Resources} All our experiments are implemented with at most 8 RTX 3090 GPUs.

\subsection{Evaluation on Effectiveness and Conferrability}
\label{sec:eff}

%To evaluate the effectiveness and conferrability of FIT-Print, we extract the testing samples and validate the fingerprint on the reused models. 
%We utilize 7 reused models as augmentation models in Eq.~(\ref{eq:augu}) to extract the testing samples and evaluate the effectiveness and conferrability of FIT-Print. 
Table~\ref{tab:baseline} illustrates the percentage of successfully identified reused models (ownership verification rate). Both FIT-ModelDiff and FIT-LIME can recognize the reused models under five reuse techniques with $100\%$ ownership verification rates, which outperform existing fingerprinting methods and perform on par with the SOTA white-box method, ModelGiF. Also, FIT-ModelDiff and FIT-LIME achieve $0.0\%$ ownership verification rates on the independent models, indicating that our methods do not lead to false alarms. Fig.~\ref{fig:berbox} illustrates the BERs of the reused models, which are all less than the threshold $\tau$ with a maximum of $0.227$. The results validate the effectiveness and conferrability of FIT-Print. 

\subsection{Ablation Study}
\label{sec:ablation}

\subsubsection{Effects of the Length of the Fingerprint}
\label{sec:fp-length}

In this experiment, we investigate the impact of varying lengths of the fingerprint $\bm{F}$.  In addition to the default length of $256=16\times 16$, we set the length to be $12\times 12$, $20\times 20$, and $24\times 24$. The results illustrated in Fig.~\ref{fig:fp-length} indicate that both FIT-ModelDiff and FIT-LIME can recognize the reused models and the independent models with different lengths of fingerprints. Moreover, as the length of $\bm{F}$ increases, the BERs of both reused and independent models are more concentrated, signifying that a larger fingerprint length can reduce the probability of outliers and have better security.

\subsubsection{Effects of the $\ell_2$-norm Coefficient}
\label{sec:lambda}

% 这里一张图，展示不同lambda的扰动后的图。（放附录）
% 还有一张表，分不同lambda，展示BER区间，和Average扰动
$\lambda$ is the coefficient of the scale of the perturbations in the loss function Eq.~(\ref{eq:augu}). In this experiment, we study the effect of $\lambda$ on FIT-Print and adopt FIT-ModelDiff and FIT-LIME with five different $\lambda$. From Table~\ref{tab:lambda}, since the scale of the perturbations in FIT-ModelDiff is quite small, varying $\lambda$ does not significantly affect the perturbations as well as the distances with reused models. While in FIT-LIME, a larger $\lambda$ can lead to a smaller perturbation. The $\ell_2$-norm of the perturbations reduces from 0.020 to 0.014. In the meantime, the average distances with reused models become larger. Our experiments also suggest that the effect of $\lambda$ on the distances with independent models is not significant. 
The visualization of the perturbed testing samples with different $\lambda$ can be found in Fig.~\ref{fig:visual-lambda} and it also confirms our above conclusion.

\begin{figure}[t]
    \centering
    \includegraphics[width=1.0\linewidth]{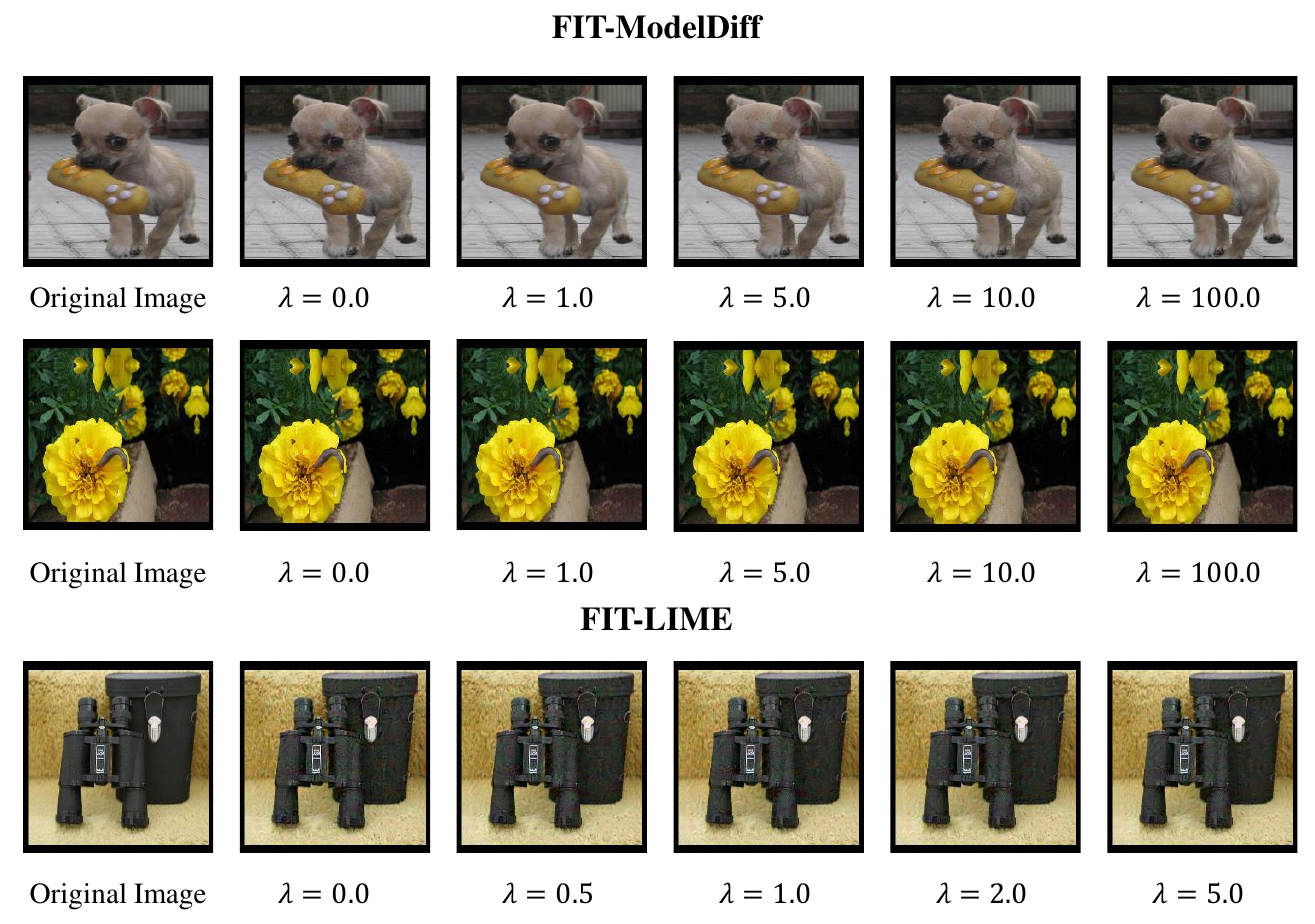}
    \caption{The visualization of the original images and the perturbed testing samples with different $\lambda$.}
    \label{fig:visual-lambda}
    % \vspace{-5pt}
\end{figure}

\begin{figure}[t]
    \centering
    \subfloat[Target-file]{
    \includegraphics[width=0.21\linewidth]{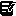}
    }\hfill
    \subfloat[Target-tick]{
    \includegraphics[width=0.21\linewidth]{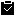}
    }\hfill
    \subfloat[Target-noise]{
    \includegraphics[width=0.21\linewidth]{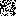}
    }\hfill
    \caption{The visualization of the targeted fingerprints. We utilize three different target fingerprints in our experiments. The ``Target-file'' fingerprint is used in our main experiments.}
    \label{fig:fingerprint}
    \vspace{-1em}
\end{figure}

\begin{figure*}[t]
    \centering
    \includegraphics[width=0.80\linewidth]{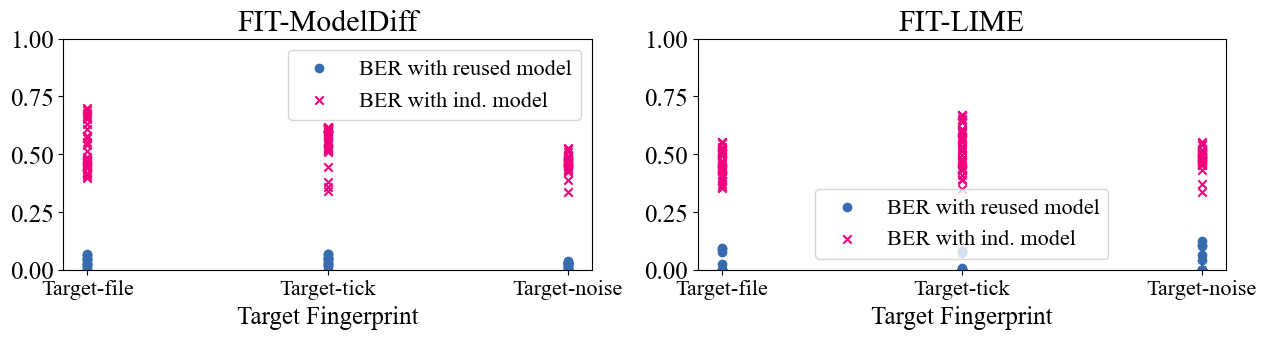}
    \caption{The BERs of the reused models and independent models with different target fingerprints. Regardless of the target fingerprints, our FIT-ModelDiff and FIT-LIME have the capability to distinguish the reused models and the independent models.}
    \label{fig:ber-target}
    \vspace{-1em}
\end{figure*}

\begin{figure*}[t]
    \centering
    \includegraphics[width=0.80\linewidth]{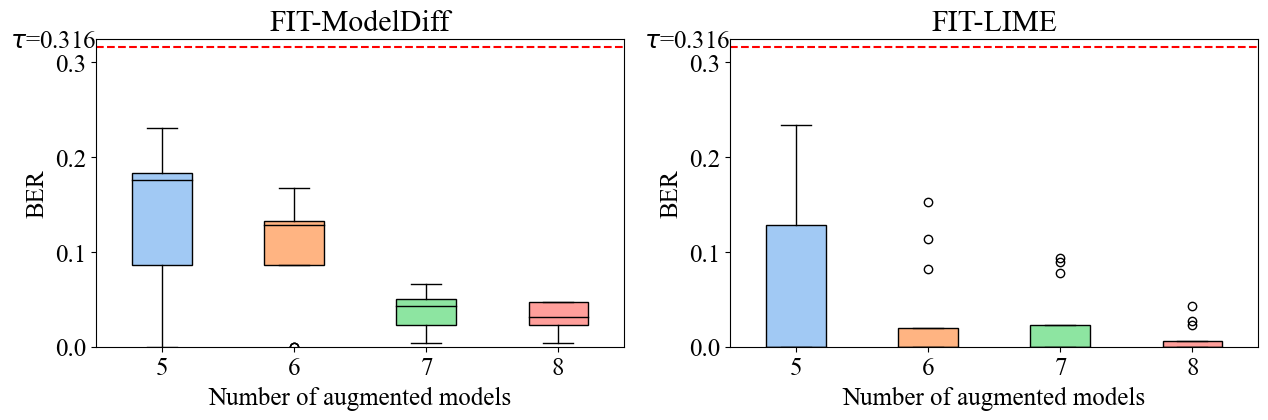}
    \caption{The BERs of the reused models with different numbers of augmented models. As shown in this figure, as we expected, using more models for augmentations can have lower BERs.}
    \label{fig:aug-models}
    \vspace{-1em}
\end{figure*}

\subsubsection{Effects of Different Target Fingerprints}

% Target Fingerprint的图

% \noindent\textbf{How to Choose the Target Fingerprint $F$.} We briefly introduce how to choose the target fingerprint. In our method, the targeted fingerprint is a bit string representing the identity of the model developer and needs to be registered to the trustworthy verifier. For instance, the company's logo or personal identity number can be used as a targeted fingerprint. We note that the choice of the fingerprint does not affect the model performance. This is because model fingerprinting does not alter the models' parameters and has no impact on the model performance. This is a key advantage of fingerprinting.

%\noindent\textbf{Experiments with Different Target Fingerprints.} 
%In the main experiments of our paper, we utilize an image of a `file' and a `pen' as the target fingerprint $\bm{F}$.
We hereby explore how the properties of the target fingerprint (\eg, spatial pattern and complexity) affect the collision probability (\ie, false alarms on independent models) and overall security. Specifically, we choose three distinct target fingerprints representing different levels of complexity: \textbf{(1)} a ``tick'' image (simple spatial pattern, moderate complexity), \textbf{(2)} an image of a ``file'' and a ``pen'' (high complexity), and \textbf{(3)} a random noise image (no spatial pattern, maximum entropy). The visualization of the three fingerprints (resized to $16 \times 16$) is shown in Fig.~\ref{fig:fingerprint}.

The experimental results are shown in Fig.~\ref{fig:ber-target}. From Fig.~\ref{fig:ber-target}, we can find that regardless of the signature's pattern and complexity, all the BERs of reused models are lower, and the BERs of independent models are larger than the threshold. This demonstrates that our FIT-Print is effective with different target fingerprints.

\begin{figure*}[t!]
    \centering
    \includegraphics[width=0.80\linewidth]{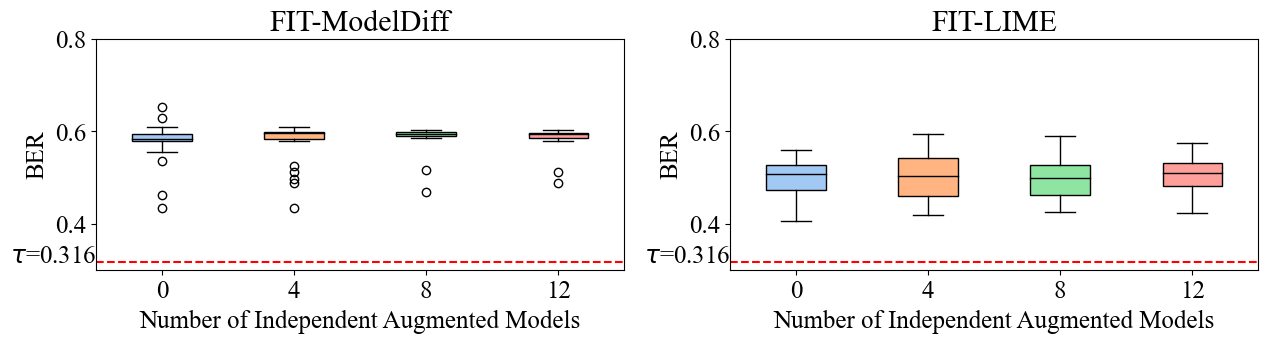}
    % \vspace{-1em}
    \caption{The BERs of the independent models while conducting adaptive false claim attacks using different numbers of independent models as augmented models. The BERs are all larger than the threshold $\tau$.}
    \label{fig:negative}
    % \vspace{-0.5em}
\end{figure*}

\subsubsection{Effects of Different Numbers of Augmented Models}
\label{sec:aug-model}

In the testing sample extraction stage, FIT-Print utilizes the reused models as augmented models to enhance the conferrability of the fingerprint. In this section, we study leveraging different numbers of augmented models to extract the testing samples and test whether FIT-Print maintains a satisfactory conferrability. Fig.~\ref{fig:aug-models} depicts the BERs of the reused models with different numbers of augmented models. We set the number to 5, 6, 7, 8. From Fig.~\ref{fig:aug-models}, we can find that as the number of augmented models increases, the BERs on reused models become smaller and more concentrated, which means using more reused models as augmented models can enhance the conferability of FIT-Print. In addition, while using only 5 reused models as augmented models, all the BERs are smaller than the threshold $\tau$, which signifies the conferrability of FIT-Print. 

\subsubsection{Effects of the Bias Parameter $\alpha$}

In this section, we study the influence of the bias parameter $\alpha$ (as defined in Eq.~\eqref{eq:ddvm}) on the performance of FIT-ModelDiff. We set $\alpha$ to $\frac{1}{16}\pi, \frac{1}{8}\pi, \frac{1}{4}\pi$, and $\frac{1}{3}\pi$. The results in Table~\ref{tab:alpha} show that larger $\cos\alpha$ leads to lower BERs, and thus may cause better robustness. However, our method is generally robust to the choice of the bias parameter $\alpha$.

\begin{table}[t]
% \vspace{-0.8em}
    % \tabcolsep=1mm
    \renewcommand{\arraystretch}{1.1}
    \centering
    \caption{The average bit error rates of different $\alpha$. The results demonstrate that larger $\cos\alpha$ leads to lower BERs, but our method is generally robust to the choice of $\cos\alpha$.}
    \label{tab:alpha}
    \scalebox{0.85}{
    \begin{tabular}{c|cccc}
    % \hline
    % \hline
    \hline
    \hline
    $\alpha$ & $\frac{1}{16}\pi$ & $\frac{1}{8}\pi$ & $\frac{1}{4}\pi$ & $\frac{1}{3}\pi$ \\
    $\cos\alpha$ & 0.9808 & 0.9239 & 0.7071 & 0.5000\\
    \hline
    Avg. BER ($\downarrow$) & 0.015 & 0.033 & 0.118 & 0.217 \\
    \hline
    \hline
    \end{tabular}
    }
    \vspace{-1em}
\end{table}

\subsection{The Resistance to Adaptive False Claim Attack}
\label{sec:resist}

We hereby evaluate our FIT-Print against the adaptive false claim attack, where the adversary utilizes Eq.~(\ref{eq:augu}) to optimize the testing samples yet intentionally crafts some independent models as augmented models to enhance the transferability of the adversary's false fingerprint. We utilize the models trained on ImageNet and their corresponding reused models as the augmented models. Fig.~\ref{fig:negative} demonstrates that adding independent augmented models does not significantly enhance the transferability of the fingerprint in FIT-Print because the BERs on the independent models are nearly unchanged. However, as shown in Section~\ref{sec:falseclaim}, existing fingerprinting methods are vulnerable to false claim attacks due to their untargeted nature. In contrast, our targeted FIT-Print is significantly more resistant to false claims since targeted transferable fingerprints are much more difficult to craft (as analyzed in Section~\ref{sec:framework}). Empirically, such a phenomenon is also validated in the area of adversarial attacks~\cite{wang2023lfaa, wang2023towards} (\ie, targeted adversarial examples have lower transferability than untargeted ones). 

\subsection{The Resistance to Adaptive Fingerprint Removal Attacks}

In real-world scenarios, the model reuser usually knows which model fingerprinting method is leveraged by the model developer and can accordingly design an adaptive attack against the utilized model fingerprinting method. Generally speaking, there are two different ways to attack a model fingerprinting method~\cite{yao2023removalnet}: \textbf{(1)} fine-tuning the model (\ie, model-based attacks) or \textbf{(2)} perturbing or preprocessing the input data (\ie, input-based attacks) to obfuscate the fingerprint of the model. %In this section, we primarily focus on the former attack. The details and discussions of the latter attack can be found in Appendix~\ref{sec:input-based}.

In model-based adaptive attacks, the model reuser can fine-tune the model, attempting to remove the original fingerprint inside it. Based on the knowledge of the model reuser with the fingerprint of the model developer, we consider two different model-based adaptive attacks.

\begin{itemize}[leftmargin=*]
    \item \emph{Overwriting Attack:} In overwriting attacks, we assume that the model reuser has no knowledge of the testing samples $\mathcal{X}_T$ and the target fingerprint $\bm{F}$ utilized by the model developer. Thereby, the model reuser may independently generate their own $\hat{\mathcal{X}}_T$ and $\hat{\bm{F}}$, and then fine-tunes the model to make the outputs of $\hat{\mathcal{X}}_T$ close to the target fingerprint $\hat{\bm{F}}$. The loss function can be defined as follows.
    \begin{equation}
        \min_{M_o}\frac{1}{|\hat{\mathcal{X}}_T|}\sum_{\hat{\bm{x}} \in \hat{\mathcal{X}}_T}\mathcal{L}(f(M_o(\hat{\bm{x}}), \hat{\bm{F}}).
    \end{equation}
    \item \emph{Unlearning Attack:} In unlearning attacks, we assume that the model reuser knows the target fingerprint of the model developer, since it may be registered in a third-party institution and publicly accessible. However, the model reuser still has no knowledge of the testing samples. As such, the model reuser can construct some independent testing samples to unlearn the target fingerprint $\bm{F}$ from the model. The loss function can be defined as follows.
    \begin{equation}
        \max_{M_o}\frac{1}{|\hat{\mathcal{X}}_T|}\sum_{\hat{\bm{x}} \in \hat{\mathcal{X}}_T}\mathcal{L}(f(M_o(\hat{\bm{x}}), \bm{F}).
    \end{equation}
\end{itemize}

\begin{table}[t]
    \renewcommand{\arraystretch}{1.1}
    \caption{The BERs before and after the adaptive overwriting attack and unlearning attack. The BERs after attacks are still low enough to be identified as a reused model. Thus, FIT-Print is able to resist the overwriting attack and unlearning attack.}
    \label{tab:adaptive}
    \centering
    \scalebox{0.80}{
    \begin{tabular}{c|ccc}
        \hline
        \hline
        % \toprule
        Method &  Before Attack & After Overwriting & After Unlearning\\
        \hline
        FIT-ModelDiff & 0.047 & 0.051 & 0.149\\
        FIT-LIME & 0.000 & 0.016 & 0.016\\
        \hline
        \hline
        % \bottomrule
    \end{tabular}
    }
    % \vspace{-1em}
\end{table}

The results of the two adaptive attacks are demonstrated in Table~\ref{tab:adaptive}. The results indicate that neither attack can successfully remove the fingerprint from the model. Due to more knowledge about the fingerprint $\bm{F}$, the unlearning attack is slightly more effective than the overwriting attacks, but still not able to bypass the ownership verification, with a BER of 0.149. The experimental results show that the model reuser cannot destroy the fingerprint inside the model when having no knowledge of the testing samples. Our FIT-Print can resist both the overwriting attack and the unlearning attack.

\begin{table}[t]
    \tabcolsep=4mm
    \renewcommand{\arraystretch}{1.1}
    \caption{The performance of FIT-ModelDiff and FIT-LIME in the label-only scenario. The results show that FIT-Print can still distinguish the reused models with only top-1 labels.}
    \label{tab:label-only}
    \centering
    \scalebox{0.85}{
    \begin{tabular}{c|cc}
        \hline
        \hline
        % \toprule
        Metric$\downarrow$ Method$\rightarrow$ &  FIT-ModelDiff & FIT-LIME\\
        \hline
        Avg. BER & 0.227 & 0.135 \\
        Ownership Verification Rate & 1.000 & 1.000 \\
        \hline
        Avg. BER of Ind. Models & 0.369 & 0.399 \\
        False Positive Rate & 0.000 & 0.000 \\
        \hline
        \hline
        % \bottomrule
    \end{tabular}
    }
    % \vspace{-1em}
\end{table}

\subsection{FIT-Print in the Label-only Scenario}
\label{sec:label-only}

In this section, we investigate the effectiveness of FIT-Print in the label-only scenario. In such a scenario, the verifier can only obtain the Top-1 label instead of the logits as output. Unfortunately, detecting transfer learning models, which is one of our considered important model reuse settings, is still an open problem in model ownership verification~\cite{sun2023deep}. Transfer learning can change the task of the model and the output classes. It is hard to determine whether a model is transferred from another with only top-1 labels. As such, we do not consider transfer learning models in the following discussion.

FIT-Print can easily be extended to distinguish the reused models (except transfer learning models) with the top-1 labels. In the label-only scenario, we can construct a binary vector $\bm{b}$ to replace the original logits. Assuming that the predicted top-1 class is $a$, the $a$-th element in $\bm{b}$ is set to 1, and the other elements are 0. The other processes remain unchanged.

To verify the effectiveness, we conduct additional experiments. Table~\ref{tab:label-only} shows the average bit error rate (Avg. BER) on the reused models and the independent models (Ind. Models). We also present the ownership verification rates on the reused models and the false positive rates on the independent models. The results show that our methods are still highly effective under the label-only setting, although the average BERs of the reused models decrease in this scenario.

\section{Discussion}

\subsection{Related Work}
\label{sec:related}

As the deployment of deep learning expands, ensuring the security, privacy, and trustworthiness of AI models and systems has drawn significant attention~\cite{jiang2024lancelot, xie2025harmony}. Within the broader context of trustworthy AI, in this section, we provide a comprehensive discussion on model fingerprinting. 
% Some existing studies have been developed to compute `the functional distance' between different models. Although these works serve a different purpose from model fingerprinting for ownership verification, they share technical similarities. Therefore, we collectively refer to these works as model fingerprinting. 
In general, existing model fingerprinting can be categorized into white-box and black-box methods~\cite{sun2023deep}.

\partitle{White-box Model Fingerprinting Methods} In the white-box scenario, since the model developer can get access to the parameters of the suspicious model, a direct way to compare the models is to compare the weights (or their hash values) of the models. Some existing white-box model fingerprinting methods leveraged the path of model training~\cite{jia2021proof}, the random projection of model weights~\cite{zheng2022dnn}, or the learnable hash of the model weights~\cite{xiong2022neural}. Some recent works also explored utilizing the deep representations (\eg, gradients~\cite{song2023modelgif}) or the intermediate results~\cite{chen2022copy} of the testing samples as the fingerprint. Recently, these concepts have also been extended to large language models (LLMs)~\cite{shao2025sok} by extracting unique fingerprints from static transformer weights~\cite{zeng2024huref}, dynamic forward-pass representations~\cite{zhang2025reef}, or backward-pass gradients~\cite{wu2025tensorguard}. However, similar to traditional white-box watermarking, the strict requirement of full parameter access severely restricts their practical deployment in real-world scenarios, especially for closed-source commercial models.

\partitle{Black-box Model Fingerprinting Methods} In the black-box scenario, the model developer is assumed to have only API access to the suspicious model. Existing black-box model fingerprinting methods can be classified into adversarial-example-based (AE-based) methods~\cite{cao2021ipguard, wang2021characteristic, lukas2021DeepNeuralNetwork} and testing-based methods~\cite{li2021ModelDiff, chen2022teacher, chen2022copy} and we have presented their formulations in Section~\ref{sec:formulation}. AE-based methods craft adversarial examples to identify the decision boundary of different models~\cite{cao2021ipguard}. Lukas et al.~\cite{lukas2021DeepNeuralNetwork} proposed to craft some reused models as augmented models to improve the conferrability of the fingerprint. On the contrary, testing-based methods compare the model behavior on the testing samples at the specific mapping function. ModelDiff~\cite{li2021ModelDiff} and SAC~\cite{guan2022AreYouStealing} utilized the distances between the output logits of different input samples, while Zest~\cite{jia2022ZestLIME} took the feature attribution map output by LIME~\cite{ribeiro2016should} as the mapping function. In addition, Chen et al.~\cite{chen2022copy} proposed a series of mapping functions to calculate model similarity. Compared to AE-based methods, testing-based methods have the capability to compare models across different tasks and output formats, thereby attracting greater attention in practice. In the context of modern LLMs, black-box approaches are further categorized into untargeted and targeted paradigms~\cite{shao2025sok}. While untargeted methods analyze the statistical or semantic features of general generation responses~\cite{pasquini2025llmmap, shao2026reading}, targeted LLM fingerprinting utilizes adversarially optimized prompts to force the model to output specific, pre-defined textual signatures~\cite{gubri2024trap}. These emerging LLM studies share similar underlying philosophies with our FIT-Print paradigm in pursuing more reliable and resilient copyright auditing.

\subsection{Extending FIT-Print to Other Models and Datasets}
\label{sec:extension}

In our main experiments, we focus on image classification models. It is also technically feasible to extend our FIT-Print to models of any task. In this section, we discuss how FIT-Print can generalize to different types of models and data.

\partitle{The Extension to Other Models} 
We argue that our FIT-Print can also generalize to models with different architectures and tasks. For models with different architectures, since we do not make any assumptions about the architecture of the models and we also do not need to alter or fine-tune the model, our method can fundamentally generalize to models with other architectures (\eg, transformers~\cite{vaswani2017attention}) as well. For models with different tasks, the major difference between models with different tasks is the output format. For instance, the image generation model outputs a tensor consisting of a sequence of logits. FIT-ModelDiff calculates the cosine similarity between the outputs, and FIT-LIME calculates the average entropy of the output. Arguably, these calculation methods can be applied to any output format (\eg, 1-D vectors, 2-D matrices, or tensors). As such, our methods are naturally feasible for models with different tasks in practice.

To further demonstrate the extensibility, we conduct additional empirical evaluations on several advanced and widely adopted image classification architectures, specifically Vision Transformer (ViT)~\cite{dosovitskiy2020image}, Swin Transformer (SwinViT)~\cite{liu2021swin}, and EfficientNet~\cite{tan2019efficientnet}. The experimental results are in Table~\ref{tab:more_models}. As illustrated, both FIT-ModelDiff and FIT-LIME maintain exceptional performance with BERs $< 0.17$ across these diverse model architectures. These results compellingly indicate that FIT-Print can successfully extract targeted fingerprints and verify ownership regardless of the underlying model structure.

\begin{table}[t]
    \tabcolsep=4mm
    \renewcommand{\arraystretch}{1.2}
    \centering
    \caption{The bit error rates (BER) of applying FIT-ModelDiff and FIT-LIME to advanced image classification models.}
    \label{tab:more_models}
    \scalebox{0.88}{
    \begin{tabular}{c|ccc}
    \hline
    \hline
    % \toprule[1.5pt]
    % Method & \multicolumn{3}{c|}{FIT-ModelDiff} & \multicolumn{3}{c}{FIT-LIME} \\
    Model  & ViT & SwinViT & EfficientNet \\
    \hline
    FIT-ModelDiff & 0.117 & 0.125 & 0.164 \\
    FIT-LIME & 0.008 & 0.000 & 0.012 \\
         \hline
         \hline
         % \bottomrule[1.5pt]
    \end{tabular}
    }
    % \vspace{-0.5em}
\end{table}

% \vspace{0.3em}
\partitle{The Extension to Other (Types of) Datasets}
Our FIT-Print can also generalize to other types of datasets. Our primitive FIT-Print aims to optimize a perturbation $\bm{r}$ on the input $\bm{x}$ to make the mapping vector close to the targeted fingerprint. The main part of the loss function is as Eq.~(\ref{eq:main_part}).
\begin{equation}
\label{eq:main_part}
    \min\mathcal{L}(f(M_o(\bm{x}+\bm{r}), \bm{F}).
\end{equation}

However, for the discrete data ($e.g.$, text data), it is not feasible to directly add the perturbation to it. Thus, a rewriting function $g(\bm{x})$ can be introduced to rewrite the characters, words, or sentences. The loss function can be changed to Eq.~(\ref{eq:text}).

\begin{equation}
    \label{eq:text}
    \min\mathcal{L}(f(M_o(g(\bm{x})), \bm{F}).
\end{equation}
The main challenge lies in how to design an effective optimization method to find a rewriting function $g(\pmb{x})$ which minimizes the above loss function. There are already some existing works~\cite{guo2021gradient, wen2024hard} to fulfill this task. Accordingly, FIT-Print can be adapted to other data formats ($e.g.$, text or tabular).

% \begin{figure}[t]
%     \centering
%     \includegraphics[width=0.99\linewidth]{fig/visual-lambda.pdf}
%     \caption{The visualization of the original images and the perturbed testing samples with different $\lambda$.}
%     \label{fig:visual-lambda}
%     \vspace{-10pt}
% \end{figure}

\partitle{Case Study on Text Generation Model}
\label{sec:case}
We conduct a case study on implementing FIT-Print on text generation models. Text generation models~\cite{pang2025modelshield, guan2024sample} have become the most famous models in recent years and have been widely applied in various domains. Specifically, the text generation model predicts the next token in a sequence of tokens, \ie, the output of the text generation models is a sequence of logits. Given an input sequence $\bm{s}=\{s_1, s_2, ..., s_q\}$, where $q$ is the number of tokens in the sequence, and a vocabulary $\mathcal{V}$, the text generation model outputs a sequence $\bm{o}\in \mathbb{R}^{q\times |\mathcal{V}|}$. The $i$-th element in $\bm{o}$ is the probability logit of the tokens in the vocabulary.

To implement FIT-Print on text generation models, we need to optimize Eq.~(\ref{eq:text}) to generate the testing samples. Arguably, our FIT-ModelDiff and FIT-LIME can easily generalize to protect text generation models. Specifically, the mapping functions used in FIT-ModelDiff and FIT-LIME (\ie, cosine similarity and average entropy) can be directly applied to text generation models. This is because text generation models differ from classification models only in the output dimension, and these two functions are inherently able to calculate data with different dimensions. The main challenge is to optimize the discrete text data to minimize Eq.~(\ref{eq:text}). To achieve this goal, we can exploit existing text optimization methods~\cite{guo2021gradient, wen2024hard}. Specifically, we implement the optimization method proposed in \cite{wen2024hard}. It optimizes the embeddings of the text sequences and then finds the nearest token in the embedding space to replace the original token.

We further conduct experiments to verify the effectiveness of our FIT-ModelDiff and FIT-LIME on text generation models. We use two popular text generation models (\ie, GPT-2~\cite{radford2019language} and BERT~\cite{devlin2018bert}) and two datasets (\ie, ptb-text-only~\cite{marcus1993building} and lambada~\cite{paperno2016lambada}) for our case study. Table~\ref{tab:text} shows the bit error rates (BERs) of applying our methods to text generation models. The BERs are all lower than the threshold $\tau=0.227$, indicating that FIT-Print is also applicable to protect the IPR on other data formats.

%Take the text generation model as an instance. We implement the optimization method used in \cite{wen2024hard} to verify the effectiveness of FIT-Print. Specifically, we utilize two popular text generation models, GPT-2~\cite{radford2019language} and BERT~\cite{devlin2018bert}, and two datasets, ptb-text-only~\cite{marcus1993building} and lambada~\cite{paperno2016lambada}. Table~\ref{tab:text} shows the bit error rates (BER) of applying FIT-ModelDiff and FIT-LIME to text generation models and demonstrates that FIT-Print is also applicable to protect the IPR on other data formats.

\begin{table}[t]
    \tabcolsep=4mm
    \renewcommand{\arraystretch}{1.2}
    \centering
    \caption{The BERs of applying FIT-ModelDiff and FIT-LIME to text generation models (lower is better).}
    \label{tab:text}
    \scalebox{0.82}{
    \begin{tabular}{c|cc|cc}
    \hline
    \hline
    % \toprule
    Method$\rightarrow$ & \multicolumn{2}{c|}{FIT-ModelDiff} & \multicolumn{2}{c}{FIT-LIME} \\
    Dataset$\downarrow$ Model$\rightarrow$  &  GPT-2 & BERT & GPT-2 & BERT \\
    \hline
    ptb-text-only & 0.188 & 0.188 & 0.031 & 0.000 \\
    lambada & 0.188 & 0.125 & 0.062 & 0.000\\
         \hline
         \hline
         % \bottomrule
    \end{tabular}
    }
    % \vspace{-0.5em}
\end{table}

\subsection{The Overhead of FIT-ModelDiff and FIT-LIME}
\label{sec:overhead}

Compared with existing model fingerprinting methods, FIT-Print needs to optimize the testing samples and thus has an extra overhead. We hereby present a detailed analysis of the time and space complexity of the two fingerprinting methods, FIT-ModelDiff and FIT-LIME. FIT-ModelDiff and FIT-LIME are the representatives of bit-wise and list-wise methods, respectively. As such, they have different trade-offs in time and space overhead.

\partitle{Overhead during the Fingerprint Verification Stage}
\begin{itemize}
    \item For FIT-ModelDiff, the space complexity is $O(1)$ and the time complexity is $O(k)$ where k is the length of the targeted fingerprint. FIT-ModelDiff is a bit-wise fingerprinting method that extracts the fingerprint bit by bit. It only reads two samples to calculate one bit in the fingerprint in each step. The fingerprint can be extracted in k steps. Therefore, the space complexity is $O(1)$, and the time complexity is $O(k)$.
    \item For FIT-LIME, the space complexity is $O(k)$ and the time complexity is $O(k/\beta)$ where $\beta$ is the batch size. FIT-LIME is a list-wise method that extracts the fingerprint as a whole list. FIT-LIME needs to read all the samples at the same time to extract the fingerprint. On the other hand, FIT-LIME can calculate the outputs of an entire batch at the same time. As such, the space complexity is $O(k)$, and the time complexity is $O(k/\beta)$.
\end{itemize}

\begin{table}[t]
    \tabcolsep=1mm
    \renewcommand{\arraystretch}{1.2}
    \centering
    \caption{The time overhead (seconds) of each optimization iteration applying FIT-ModelDiff and FIT-LIME with different models. ``\#Params'' indicates the number of parameters.}
    \label{tab:time}
    \scalebox{0.83}{
    \begin{tabular}{c|ccccc}
    \hline
    \hline
    % \toprule[1.5pt]
    % Method & \multicolumn{3}{c|}{FIT-ModelDiff} & \multicolumn{3}{c}{FIT-LIME} \\
    Model  & MobileNetV2 & EfficientNet & ResNet18 & SwinViT & ViT\\
    \#Params (M) & 3.50 & 5.29 & 11.69 & 28.29 & 86.57 \\
    \hline
    FIT-ModelDiff & 4.30 & 5.06 & 4.07 & 5.25 & 6.57\\
    FIT-LIME & 0.84 & 1.24 & 0.74 & 1.34 & 2.13\\
         \hline
         \hline
         % \bottomrule[1.5pt]
    \end{tabular}
    }
    % \vspace{-0.5em}
\end{table}

\partitle{Overhead during the Testing Samples Extraction Stage}
In each iteration of optimizing the testing samples, we need to perform one forward propagation and one backward propagation of the fingerprint verification method. Assuming that we utilize $\xi$ augmented models during optimization, the time complexities of each iteration of testing sample extraction in FIT-ModelDiff and FIT-LIME are $O(\xi\cdot k)$ and $O(\xi\cdot k/\beta)$. $k$ is the length of the targeted fingerprint and $\beta$ is the batch size. We evaluate the time overhead during the testing samples extraction stage for different methods, as presented in Table~\ref{tab:time}. Overall, our proposed FIT-ModelDiff and FIT-LIME demonstrate high efficiency. Even with the large ViT model, the time cost is still low ($<7$s per iteration and $<0.6$h for the entire 300 iterations). Furthermore, the testing sample extraction process is a one-time operation for the protected model and does not require repeated execution. As such, our FIT-Print is efficient in optimization, and the overall overhead is acceptable.

%For instance, in our main experiments, we utilize $10$ augmented models and a $256$-bit targeted fingerprint. It takes only 3 seconds for one optimization iteration in FIT-ModelDiff and 1 second for that in FIT-LIME. As such, arguably, our FIT-ModelDiff and FIT-LIME are efficient in optimization and the overall overhead is acceptable.

\subsection{Potential Limitations and Future Directions}
\label{sec:limitations}

As the first attempt at the target fingerprinting paradigm to defeat false claim attacks, we must acknowledge that our method still has the following potential limitations.

First, our FIT-Print depends on a trustworthy third-party institution to register the fingerprint with a timestamp, which is not currently established. However, we argue that this will certainly be realized in the foreseeable future. For example, the existing intellectual property office (IPO) or artificial intelligence regulator (AIR) can be responsible for this duty~\cite{li2025rethinking}. First, it is common for developers to register their intellectual property, including valuable models, with the IPO for copyright protection. Second, many countries and regions are in the process of establishing or have established the AIR (\eg, as exemplified in the EU Artificial Intelligence Act) to ensure security and transparency before deploying the AI models. As such, it is also feasible for the AIR to manage model registrations and audit potential infringement.

Second, our FIT-Print fails to provide formal proof of the resistance to false claim attacks due to the complexity of DNNs. %As such, a more powerful adversary may still be able to conduct a successful false claim attack. 
We will investigate how to achieve a certified robust model fingerprinting method against false claim attacks with a rigorous conceptual guarantee in our future work.

Thirdly, as discussed in Section~\ref{sec:extension}, for deterministic models that do not involve randomness ($e.g.$, CNN and LLM), our FIT-Print can generalize to different types of models and datasets. However, for non-deterministic models that involve randomness ($e.g.$, diffusion models~\cite{ho2020denoising}), we have to admit that we do not know whether our methods and existing fingerprinting methods can be adapted to them since they have a completely different inference paradigm. We will further explore and discuss this intriguing problem in our future work.

\section{Conclusion}
\label{sec:conclusion}

% In this paper, we first revisited existing model fingerprinting methods, designed a false claim attack by crafting some transferably easy samples, and revealed that existing methods were vulnerable to the false claim attack. We found that the vulnerability can be attributed to the untargeted nature of existing methods, which compare the outputs of any given samples on different models rather than the similarities to specific signatures. To tackle the above issue, we proposed FIT-Print, a false-claim-resistant model fingerprinting paradigm. FIT-Print transformed the fingerprint of the model into a targeted signature by optimizing the testing samples. We correspondingly designed two fingerprinting methods based on FIT-Print, namely the bit-wise FIT-ModelDiff and the list-wise FIT-LIME. Our empirical experiments demonstrated the effectiveness, conferrability, and resistance to false claim attacks of our FIT-Print. We hope our FIT-Print can provide a new angle on model fingerprinting methods to facilitate more secure and trustworthy model sharing and trading.
In this paper, we revisited existing model fingerprinting methods and revealed their vulnerability to false claim attacks by designing an attack that crafts transferable, "easy" samples. We demonstrated that this vulnerability primarily stems from the untargeted nature of current approaches, which compare the outputs of arbitrary samples across models rather than evaluating their similarity to specific signatures. To address this issue, we proposed FIT-Print, a targeted, false-claim-resistant model fingerprinting paradigm. By optimizing testing samples, FIT-Print effectively transforms the model fingerprint into a specific, verifiable signature. Based on this framework, we developed two corresponding methods: the bit-wise FIT-ModelDiff and the list-wise FIT-LIME. Extensive empirical experiments verified the effectiveness, conferrability, and robust resistance of FIT-Print against false claim attacks. Ultimately, we hope that FIT-Print offers a new perspective on model fingerprinting, facilitating more secure and trustworthy model sharing and trading.

\section*{Acknowledgements}

This research is supported by the Kunpeng-Ascend Science and Education Innovation Excellence/Incubation Center, the National Natural Science Foundation of China under Grants (62441238, U2441240), and the Zhejiang Provincial Natural Science Foundation of China (No. LD24F020010). The authors sincerely thank Prof. Kui Ren from Zhejiang University for the helpful comments and suggestions on an early draft of this paper, and Dr. Najeeb Jebreel from Universitat Rovira i Virgili for his assistance with language editing and proofreading.

% This research is supported in part by the National Key Research and Development Program of China under Grant 2021YFB3100300 and the National Natural Science Foundation of China under Grants (62441238, 62072395, and U20A20178). The authors sincerely thank Prof. Kui Ren from Zhejiang University for the helpful comments and suggestions on an early draft of this paper.

\bibliographystyle{IEEEtran}
\bibliography{ref}

\appendix

\setcounter{proposition}{0}
\setcounter{equation}{0}
\setcounter{page}{1}

\subsection{The Proof of Proposition 1}
\label{sec:proof}

\begin{proposition}
    % \label{theorem:threshold}
    Given the security parameter $\kappa$ and the fingerprint $\bm{F}\in\{-1, 1\}^k$, if $\tau$ satisfies that
    \begin{equation}
        \sum_{d=0}^{\lfloor \tau k \rfloor}\tbinom{k}{d} (\frac{1}{2})^k\leq \kappa,
    \end{equation}
    where $\tbinom{k}{d}=k!/[d!(k-d)!]$, the probability of a successful false claim attack, \ie, the BER is less than $\tau$ with the adversaries testing samples, is less than $\kappa$.
\end{proposition}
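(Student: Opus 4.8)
The plan is to reduce the statement to a standard lower-tail bound for a binomial distribution, the only genuine content being the probabilistic model one adopts for the extracted bits. First I would fix the registered target fingerprint $\bm{F}\in\{-1,1\}^k$ and recall that, on any suspicious model, the extracted fingerprint is $\tilde{\bm{F}}_i=\mathtt{sign}(\tilde{\bm{v}}_i)\in\{-1,1\}$ as in Eq.~(\ref{eq:binarize}). The crucial modeling step is to argue that when the adversary's testing samples are applied to a model it did \emph{not} optimize against $\bm{F}$ --- precisely the independent model $M_I$ targeted in a false-claim attempt --- each extracted bit $\tilde{\bm{F}}_i$ is equally likely to be $+1$ or $-1$ and independent across coordinates. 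Consequently $\Pr[\tilde{\bm{F}}_i\neq\bm{F}_i]=\tfrac12$ for every $i$, irrespective of the fixed value of $\bm{F}_i$, since a fair coin agrees with any predetermined sign with probability one half.

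Next I would introduce the error count $D=\sum_{i=1}^k\mathbb{I}\{\tilde{\bm{F}}_i\neq\bm{F}_i\}$, so that $\mathtt{BER}=D/k$ by Eq.~(\ref{eq:ber}). Under the independence-and-uniformity model of the previous step, $D$ is a sum of $k$ i.i.d.\ Bernoulli$(\tfrac12)$ indicators and hence follows a Binomial$(k,\tfrac12)$ law. The verification of Definition~\ref{prop:fitprint}, instantiated with the BER distance, declares the false claim successful exactly when $\mathtt{BER}\le\tau$, which (as $D$ is integer-valued) is equivalent to $D\le\lfloor\tau k\rfloor$.

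It then remains only to evaluate the lower tail of this binomial:
\begin{equation}
\Pr[\mathtt{BER}\le\tau]=\Pr[D\le\lfloor\tau k\rfloor]=\sum_{d=0}^{\lfloor\tau k\rfloor}\tbinom{k}{d}\Big(\tfrac12\Big)^{d}\Big(\tfrac12\Big)^{k-d}=\sum_{d=0}^{\lfloor\tau k\rfloor}\tbinom{k}{d}\Big(\tfrac12\Big)^{k},
\end{equation}
where the last equality merges the two factors of $\tfrac12$. By the hypothesis imposed on $\tau$, this sum is at most $\kappa$, giving $\Pr[\mathtt{BER}\le\tau]\le\kappa$ and completing the argument.

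The main obstacle is not this computation but the justification of the fair-coin model in the first step: one cannot rigorously \emph{derive} from the structure of a DNN that each extracted sign is an unbiased, mutually independent bit (indeed the paper itself flags the absence of a formal guarantee here among its limitations). I would therefore state it as an explicit assumption on the distribution of fingerprints extracted from models independent of $\bm{F}$, in line with the standard security convention in watermarking and consistent with the empirical observation that independent-model BERs stay well above $\tau$ (\eg\ the $0.369$--$0.399$ averages reported in the label-only setting). Any bias away from $\tfrac12$ or positive correlation among the bits would invalidate the exact binomial identity, and one would instead have to control $\Pr[D\le\lfloor\tau k\rfloor]$ by a suitable concentration inequality; establishing such a bound from the fingerprint geometry is exactly the certified-robustness question the authors leave to future work.
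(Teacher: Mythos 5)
Your proposal is correct and follows essentially the same route as the paper's proof: model each extracted bit as an unbiased, independent coin flip against the fixed target, so the mismatch count is Binomial$(k,\tfrac12)$, and the success probability is the lower tail $\sum_{d=0}^{\lfloor\tau k\rfloor}\binom{k}{d}(\tfrac12)^k\le\kappa$. The only difference is presentational: you state the fair-coin independence model as an explicit assumption (which is a fair and arguably more honest framing), whereas the paper asserts the per-bit probability of $\tfrac12$ directly from the adversary's lack of knowledge of $M_I$ and implicitly assumes independence when invoking the binomial formula.
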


\begin{proof}
    As mentioned in Section II-B and Section II-C, registering the fingerprint with a timestamp can avoid any subsequent false claim attack. As such, the adversary needs to craft the testing samples $\bar{\mathcal{X}}_T$ which can be transferred to independent models in advance. We assume that the adversary extracts the fingerprint $\bar{\bm{F}}$ from the independent model $M_I$ using the testing samples $\bar{\mathcal{X}}_T$, as follows.
    \begin{equation}
        \bar{\bm{F}}={\tt sign}(M_I(\bar{\mathcal{X}}_T)),
    \end{equation}
    We assume that $\breve{\bm{F}}$ denotes the adversary's target fingerprint. Since $\bar{\bm{F}} \in \{-1, 1\}^k$ is a $k$-bit binary vector and the adversary has no knowledge of the independent model $M_I$, the probability of any bit in $\bar{\bm{F}}$ to match the corresponding bit in $\breve{\bm{F}}$ is $1/2$. Thus, to satisfy Eq.~(5) in Definition~4, \ie, making the BER between $\bar{\bm{F}}$ and $\breve{\bm{F}}$ less than $\tau$, there needs to have at least $k - \lfloor \tau \cdot k \rfloor$ bits in $\bar{\bm{F}}$ match $\breve{\bm{F}}$. Based on the binomial theorem, we have the probability of the aforementioned scenario, \ie, a successful false claim attack, is as follows.
    \begin{equation}
    \label{eq:prob}
        \mathrm{Pr}\{{\tt BER}(\bar{\bm{F}}, \breve{\bm{F}}) \leq \tau\}=\sum_{d=0}^{\lfloor \tau k \rfloor}\tbinom{k}{d} (\frac{1}{2})^k.
    \end{equation}
    Since the right-hand side of Eq.~(3) is less than $\kappa$, the probability of a successful false claim attack, \ie, the BER, is less than $\tau$ with the adversarial testing samples, is also less than $\kappa$.
\end{proof}

% \section{Formulation of the Design Objectives}

% \section{Pseudocode of FIT-Print}

% 这里放testing sample extraction, ownership verification, FIT-ModelDiff和FIT-LIME的伪代码，一共四个

% \section{Analysis on How to Select the Threshold $\tau$}

% \subsection{Implementation Details}
% \label{sec:impdetails}

\subsection{Details of the Model Reuse Techniques}
\label{sec:reusedetails}

In our experiments, we evaluate FIT-Print and other model fingerprinting methods against the following five categories of model reuse techniques, including copying, fine-tuning, pruning, model extraction, and transfer learning.

\begin{itemize}
    \item \textbf{Copying}: Copying refers to the adversary somehow gaining white-box access to the parameters and architecture of the victim model. Subsequently, the adversary steals the model by directly copying it. It may occur when the model is open-source and publicly available.
    \item \textbf{Fine-tuning}: Fine-tuning means the adversary re-trains the victim model with its own dataset, which is related to the primitive task of the victim model. We consider three types of fine-tuning, 
    %denoted as {\tt Fine-tuning(10\%)}, {\tt Fine-tuning(50\%)}, and {\tt Fine-tuning(100\%)}, which means 
    \ie, we fine-tune the last 10\%, 50\%, and 100\% layers of the victim models.
    \item \textbf{Pruning}: Pruning~\cite{cheng2024survey} intends to compress the model by removing redundant parameters. We leverage weight pruning as our pruning method, which prunes the neurons according to their activations. We prune 10\%, 30\%, and 50\% of the parameters of the victim model. %denoted as {\tt Pruning(10\%)}, {\tt Pruning(30\%)}, and {\tt Pruning(50\%)} respectively.
    \item \textbf{Model Extraction}: Model extraction attempts to steal the knowledge of the victim model via only black-box access. The adversary can obtain the output of the victim model to train the extracted model. Following \cite{jia2022ZestLIME} and \cite{li2021ModelDiff}, we conduct the model extraction by optimizing the following loss function~\cite{yang2022rethinking}:
    \begin{equation}
        \min_{M_e} \mathcal{L}_{CE}(M_e(x), M_o(x)),
    \end{equation}
    where $\mathcal{L}_{CE}$ is the cross-entropy loss, $M_e$ and $M_o$ are the extraction source model and the original model.
    In our experiments, we implement the model extraction in two different scenarios. {\tt Extract(same)} and {\tt Extract(different)} refer to utilizing the same or different model architectures to extract the source model, respectively.
    \item \textbf{Transfer Learning}: Transfer learning~\cite{zhuang2020comprehensive} is an ML technique where the victim model trained on one task is adapted as the starting point for a model on a second related task. In our experiments, we replace the last layer of the model to fit the second task and fine-tune the model for 200 epochs. Similar to the setting of fine-tuning, we fine-tune the last 10\%, 50\%, and 100\% layers of the victim models to implement transfer learning. These models are denoted as {\tt Transfer(10\%)}, {\tt Transfer(50\%)}, and {\tt Transfer(100\%)} respectively.
\end{itemize}

\end{document}